\documentclass[aps, superscriptaddress]{revtex4-1}

\pdfoutput=1
\usepackage{graphicx}

\usepackage{amsmath}
\usepackage{amssymb}
\usepackage{bm}

\newcommand{\pa}{\partial}

\newcommand{\mean}[1]{\langle{#1}\rangle}

\newcommand{\abs}[1]{{|#1|}}

\usepackage{amsthm}
%[section]
\newtheorem{Lemma}{Lemma}%[section]
\newtheorem{Definition}{Definition}%[section]
%[section]
%[section]
%[section]
%[section]
%[section]

%%%%%

\begin{document}

\title{Asymptotic phase and amplitude for classical and semiclassical stochastic oscillators via Koopman operator theory }

\author{Yuzuru Kato}
\email{Corresponding author: kato.y.bg@m.titech.ac.jp}
\affiliation{Department of Systems and Control Engineering,
	Tokyo Institute of Technology, Tokyo 152-8552, Japan}

\author{Jinjie Zhu}
\affiliation{Department of Systems and Control Engineering,
	Tokyo Institute of Technology, Tokyo 152-8552, Japan}
\affiliation{School of Mechanical Engineering, Nanjing University
	of Science and Technology, Nanjing 210094, China}

\author{Wataru Kurebayashi}
\affiliation{Institute for Promotion of Higher Education, Hirosaki University, Aomori 036-8560, Japan}

\author{Hiroya Nakao}
\affiliation{Department of Systems and Control Engineering,
	Tokyo Institute of Technology, Tokyo 152-8552, Japan}
\date{\today}

\begin{abstract}
	The asymptotic phase is a fundamental quantity for the analysis of deterministic limit-cycle oscillators,
	and generalized definitions of the asymptotic phase for stochastic oscillators have also been proposed.
	In this article, we show that the asymptotic phase and also amplitude can be defined for classical and semiclassical stochastic oscillators in a natural and unified manner by using the eigenfunctions of the Koopman operator of the system.
	We show that the proposed definition gives appropriate values of the phase and amplitude for strongly stochastic limit-cycle oscillators, excitable systems undergoing noise-induced oscillations, and also for quantum limit-cycle oscillators in the semiclassical regime.
\end{abstract}

\keywords{Oscillations; Stochastic systems; Koopman operator analysis}

\maketitle
	
\section{Introduction}

Spontaneous rhythmic oscillations and synchronization are widely observed in various fields of science and technology~\cite{winfree2001geometry, kuramoto1984chemical, pikovsky2001synchronization, nakao2016phase, 
	ermentrout2010mathematical, strogatz1994nonlinear}.
Regular rhythmic oscillations are generally modeled by using nonlinear dynamical systems possessing stable limit-cycle attractors. %MDPI: Is the italics necessary? Please check all -- they are essentially important keywords and we would like to keep them in Italics.

The notion of \textit{asymptotic phase}~\cite{winfree2001geometry, kuramoto1984chemical, pikovsky2001synchronization, nakao2016phase, ermentrout2010mathematical}, which increases with a constant frequency in the basin of the limit-cycle attractor, is a fundamental quantity that provides a basis for \textit{phase reduction}~\cite{winfree2001geometry, kuramoto1984chemical, hoppensteadt1997weakly,	pikovsky2001synchronization, nakao2016phase, ermentrout2010mathematical, strogatz1994nonlinear}, a~standard dimensionality-reduction method for analyzing synchronization of limit-cycle oscillators under the effect of weak perturbation or~coupling.

Recently, the~asymptotic phase and {\it isochrons} (level sets of the asymptotic phase), classical notions in the theory of nonlinear oscillations since Winfree~\cite{winfree1967biological} and Guckenheimer~\cite{guckenheimer1975isochrons}, have been studied from a viewpoint of the Koopman operator theory by Mauroy, Mezi{\'c}, and~Moehlis~\cite{mauroy2013isostables}, and~their relationship with the Koopman eigenfunction associated with the fundamental frequency of the oscillator has been clarified~\cite{mauroy2013isostables, mauroy2020koopman, shirasaka2017phase, kuramoto2019concept, shirasaka2020phase}.
Moreover, they have shown that the (asymptotic) amplitude and {\it isostables}, which characterize deviation of the system state from the limit cycle and extend the Floquet coordinates~\cite{hale1969ordinary2,revzen2012finding,kuramoto2019concept}
to the nonlinear regime, can be introduced naturally in terms of the Koopman eigenfunctions associated with the Floquet exponents with non-zero real parts~\cite{mauroy2013isostables, mauroy2020koopman, shirasaka2017phase, kuramoto2019concept, shirasaka2020phase, kvalheim2021existence}.
By using the asymptotic phase and amplitude functions, we can obtain 
a reduced description of limit-cycle oscillators, which is useful for the analysis and control of synchronization dynamics of limit-cycle oscillators~\cite{mauroy2018global, wilson2016isostable,  monga2019phase, monga2019optimal, zlotnik2013optimal, kato2021optimization, takata2021fast}. 
The theory can also be generalized to delay-differential systems~\cite{kotani2020nonlinear} and spatially extended systems~\cite{nakao2020phase}. 

How to generalize the definition of the conventional asymptotic phase, which was essentially deterministic~\cite{winfree1967biological,guckenheimer1975isochrons}, to~stochastic systems has been an intriguing problem~\cite{teramae2009stochastic, goldobin2010dynamics, nakao2010effective, bonnin2017phase, bonnin2017amplitude, aminzare2019phase, kato2019semiclassical, schwabedal2013phase, thomas2014asymptotic, cao2020partial}.
When the stochasticity is sufficiently weak, the~phase and also amplitude can be defined by using the drift term of the stochastic differential equation (SDE) describing the deterministic vector field of the oscillator.
This approach can also be employed for quantum nonlinear oscillators in the semiclassical regime described by a quantum Fokker--Planck equation (FPE)~\cite{kato2019semiclassical,kato2020semiclassical}.
However, this definition is no longer applicable to strongly stochastic oscillatory systems for which the deterministic vector field does not serve as a clear reference due to the strong effect of~noise.

To cope with this problem, Schwabedal and Pikovsky~\cite{schwabedal2013phase} introduced a definition of the phase in terms of the mean first return time, and~Thomas and Lindner~\cite{thomas2014asymptotic} proposed a definition of the asymptotic phase in terms of the slowest decaying eigenfunction of the backward Fokker--Planck (Kolmogorov) operator describing the mean first passage time, both of which yield phase values that increase with a constant frequency on average for stochastic oscillations in a similar way to the ordinary asymptotic phase for deterministic oscillators.
Recently, we pointed out that the definition of the stochastic asymptotic phase by Thomas and Lindner~\cite{thomas2014asymptotic} can be seen as a natural extension of the deterministic definition from the viewpoint of the Koopman operator theory; namely, it is given by the argument of the Koopman eigenfunction associated with the fundamental frequency~\cite{kato2020quantum} (see also Reference~\cite{engel2021random}) and~extended this idea to the definition of the asymptotic phase for quantum oscillatory~systems.

In this article, based on the Koopman operator theory for stochastic systems, we propose a definition of the asymptotic phase and amplitude  
for strongly stochastic oscillators. 
They are introduced in terms of the eigenfunctions of the Koopman operator associated with the complex eigenvalues with the largest non-zero real part and with the largest non-zero real eigenvalue, respectively, which gives a natural extension of the definition in the deterministic case.
The validity of the proposed definition is illustrated for stochastic limit-cycle oscillations and noise-induced oscillations of excitable systems using noisy Stuart--Landau~\cite{kuramoto1984chemical, nakao2016phase} and FitzHugh--Nagumo~\cite{fitzhugh1961impulses, nagumo1962active} models as examples. 
Moreover, we apply the proposed definition of the stochastic phase and amplitude to a quantum limit-cycle oscillator in the semiclassical regime and show that they also yield appropriate~results.

%%%%%%%%%%%%%%%%%%%%%%%%%%%%%%%%%%%%%%%%%%%%%%%%%%%%%%%%%%%%%%%%%%%%%%%%

\section{Phase and Amplitude for Deterministic Limit-Cycle~Oscillators}
\label{sec:deterministic}
\unskip

\subsection{Classical Definition of the Asymptotic Phase and~Amplitude}

In this section, we review the definition of the asymptotic phase and amplitude for deterministic limit-cycle oscillators and discuss their relationship with the Koopman eigenfunctions~\cite{mauroy2013isostables, mauroy2020koopman,wilson2016isostable, shirasaka2017phase, kuramoto2019concept, nakao2020phase, shirasaka2020phase}.
We consider a deterministic dynamical system 
\begin{align}
	\dot {\bm X}(t) = \bm{A}(\bm{X}(t)),
	\label{system0}
\end{align}
where $\bm{X}(t) \in \mathbb R^{N}$ is the system state at time $t$, ${\bm A} : {\mathbb R}^{N} \to {\mathbb R}^{N}$ is a sufficiently smooth vector field representing the system dynamics, and~the dot $(\dot{})$ represents the time derivative.
We assume that the system has an exponentially stable limit-cycle solution ${\bm{X}}_{0}(t)$ with a natural period $T$ and frequency $\omega = 2\pi / T$, satisfying ${\bm X}_0(t+T) = {\bm X}_0(t)$. 
We denote this limit cycle as $\chi $ and its basin of attraction as $B_\chi \subseteq  {\mathbb R}^{N}$.
Instead of the time $t$, we can parameterize a point on the limit cycle $\chi$ using a phase $\phi \in [0, 2\pi)$ as ${\bm \chi}(\phi) = {\bm X}_0(\omega t)$ ($0 \leq t < T$), where the phase value $\phi = \omega t$ increases linearly with time $t$ from $0$ to $2\pi$ ($2\pi$ is identified with $0$), and the origin of the phase $\phi = 0$ is assigned to the state ${\bm X}_0(0)$ without loss of~generality.

The linear stability of $\chi$ is characterized by the Floquet exponents $\lambda_j \in {\mathbb C}$\linebreak ($j=0, 1, ..., N-1$)~\cite{hale1969ordinary2,guckenheimer1982nonlinear},
which we sort in decreasing order of their real parts, i.e.,~$\mbox{Re} (\lambda_0) \geq \mbox{Re}({\lambda_1}) \geq \mbox{Re}({\lambda_2}) \geq \cdots \geq \mbox{Re}({\lambda_{N-1}})$.
Here, the~exponent $\lambda_0$ is zero and associated with the phase direction tangent to $\chi$, and~the other exponents $\lambda_1, \dots , \lambda_{N-1}$ possess negative real parts because $\chi$ is exponentially stable and is associated with the amplitude directions deviating from $\chi$.
We further assume that $\lambda_1$ is real and $\lambda_1 \gg \mbox{Re}(\lambda_2)$, namely, the~relaxation of the slowest decaying mode is non-oscillatory and much slower than the other faster decaying modes.
Such a situation often occurs in realistic models of limit-cycle oscillators. We can then focus only on the slowest decaying mode and introduce a single real amplitude associated with~it.

The asymptotic phase function $\Phi_0 : B_\chi \to [0, 2\pi)$ and amplitude function $R_0 : B_\chi \to {\mathbb R}$ of the limit cycle $\chi$ are defined in the basin $B_{\chi}$ of $\chi$ such that
\begin{align}
	{\bm A}({\bm{X}})  \cdot  \nabla \Phi_0({\bm{X}}) &= \omega, \cr
	{\bm A}({\bm{X}})  \cdot  \nabla R_0({\bm{X}}) &=  \lambda_1 R_0({\bm X}),
	\label{defphaseamp}
\end{align}
are satisfied for all ${\bm X} \in B_\chi$~\cite{kuramoto2019concept}.
Here, the~inner product is defined as $ \bm{a} \cdot \bm{b}  = \sum_{j = 1}^{N} \overline{ a_{j} } b_{j}$ (the overline denotes complex conjugate) and $\nabla = \partial / \partial {\bm X}$ represents the gradient with respect to ${\bm X}$.
As stated above, we focus only on the phase associated with $\lambda_0=0$ and the slowest decaying amplitude associated with real $\lambda_1$.
In general, we can introduce $N-1$ amplitude variables associated with $N-1$ exponents $\lambda_1, \dots , \lambda_{N-1}$, which are in general complex, and~obtain a closed set of %phase-amplitude 
equations for the phase and amplitudes~\cite{kuramoto2019concept}.
The level sets of the phase function are called {\it isochrons}~\cite{winfree1967biological,guckenheimer1975isochrons} and those of the amplitude function are called {\it isostables}~\cite{mauroy2013isostables}.

By using the above definition, we can introduce the phase and amplitude variables for the oscillator state ${\bm X}(t) \in B_\chi$ at time $t$ as $\phi(t) = \Phi_0({\bm X}(t))$ and $r(t) = R_0({\bm X}(t))$, which~obey
\begin{align}
	\dot{\phi}(t) = \dot{\Phi}_0({\bm X}(t)) &= {\bm A}({\bm X}(t)) \cdot \nabla \Phi_0({\bm X}(t)) = \omega, \cr
	\dot{r}(t) = \dot{R}_0({\bm X}(t)) &= {\bm A}({\bm X}(t)) \cdot \nabla R_0({\bm X}(t)) =  \lambda_1 R_0({\bm X}) = \lambda_1 r(t),
	\
	\label{phase_amp}
\end{align}
that is, the~phase $\phi$ always increases with a constant frequency $\omega$ and the amplitude $r$ decays exponentially with the rate $\lambda_1$ as ${\bm X}$ evolves in $B_\chi$ toward $\chi$.

Note that the phase function is determined only up to an arbitrary constant and the scale of the amplitude function $R_0({\bm X})$ is also arbitrary, because~$\Phi_0({\bm{X}}) + c_1$ and $c_2 R_0({\bm X})$ with arbitrary constants $c_1, c_2 \in {\mathbb R}$ also satisfy Equation~(\ref{defphaseamp}).
Suppose that the initial state is ${\bm X}_0 \in B_\chi$ at $t=0$. If~we assign the phase $\phi(0;{\bm X}_0)$ and amplitude $r(0;{\bm X}_0)$ to the initial state ${\bm X}_0$, we obtain $\phi(t;{\bm X}_0) = \omega t + \phi(0;{\bm X}_0)$ and $r(t;{\bm X}_0) = r(0;{\bm X}_0) \exp(\lambda_1 t)$, whose
dependence on ${\bm X}_0$ is explicitly~shown.

By focusing only on the asymptotic phase and amplitude, we can perform {\it phase-amplitude reduction} (or isochron-isostable reduction) of a limit-cycle oscillator~\cite{wilson2016isostable, shirasaka2017phase, mauroy2018global, shirasaka2020phase}, in~which we reduce the dimensionality of the system dynamics from $N$ to $2$ and approximately 
describe it by a simple set of two-dimensional phase and amplitude equations.
The phase equation has been extensively used for the analysis of weakly coupled limit-cycle oscillators~\cite{winfree2001geometry, kuramoto1984chemical, pikovsky2001synchronization, nakao2016phase, ermentrout2010mathematical, strogatz1994nonlinear}, and~the amplitude equation has also been used recently for the analysis and control of limit-cycle oscillators~\cite{mauroy2018global, wilson2016isostable,  monga2019phase, monga2019optimal, takata2021fast}. 

\subsection{Koopman Operator~Viewpoint}

The asymptotic phase and amplitude introduced in the previous subsection are closely related to the Koopman operator of the system~\cite{mauroy2018global, wilson2016isostable,  monga2019phase, monga2019optimal}.
The Koopman operator $U^{\tau}$, which describes the evolution of a general {\it observable} $g$ of the system state  ${\bm X} \in {\mathbb R}^N$, is defined as
\begin{align}
	(U^{\tau} g)({\bm X}) = g(S^{\tau} {\bm X}),
\end{align}
where $g : {\mathbb R}^N \to {\mathbb C}$ is the observable, i.e.,~an observation function that maps a system state to an observed value, and~$S^{\tau} : {\mathbb R}^N \to {\mathbb R}^N$ is a flow of the system satisfying ${\bm X}(t+\tau) = S^{\tau} {\bm X}(t)$ for $\tau \geq 0$.
When the flow $S^{\tau}$ is analytic, it can be expanded as
\begin{align}
	S^{\tau} {\bm X} = {\bm X} + \tau {\bm A}({\bm X}) + O(\tau^2)
\end{align}
for $|\tau| \ll 1$.
Considering an analytic observable $g$, we can expand it as
\begin{align}
	g(S^{\tau} {\bm X}) = g({\bm X}) + \tau {\bm A}({\bm X}) \cdot \nabla g({\bm X}) + O(\tau^2).
\end{align}
Therefore, the~infinitesimal time evolution of $g$ can be expressed as
\begin{align}
	\frac{d}{dt} g({\bm X}) = \lim_{\tau \to 0} 
	\frac{ U^{\tau} g({\bm X}) - g({\bm X}) }{\tau} = \lim_{\tau \to 0} \frac{ g(S^{\tau} {\bm X}) - g({\bm X}) }{\tau} = {\bm A}({\bm X}) \cdot \nabla g({\bm X}).
\end{align}
The operator
\begin{align}
	{A} = {\bm A}({\bm X}) \cdot \nabla,
	\label{koopmaninfini}
\end{align}
which appeared in Equation~(\ref{phase_amp}), can thus be interpreted as an infinitesimal generator of the Koopman operator $U^\tau$.

For the limit-cycle oscillator described by Equation~(\ref{system0}), we can easily confirm that the complex exponential of the phase function $\Phi_0({\bm X})$,
\begin{align}
	\Psi_0({\bm X}) = e^{i \Phi_0({\bm X})},
\end{align}
is an eigenfunction of the operator ${A}$ with an eigenvalue $i \omega_1$ where $i = \sqrt{-1}$, because~%
\begin{align}
	{A} \Psi_0({\bm X}) = i \omega \Psi_0({\bm X})
	\label{eq:detiw}
\end{align}
is satisfied for ${\bm X} \in B_{\chi}$.
Therefore, from~the viewpoint of the Koopman operator theory, the~asymptotic phase can be introduced as the argument (polar angle) of the Koopman eigenfunction $\Psi_0({\bm X})$ associated with the eigenvalue $i \omega$, which is determined by the natural frequency $\omega$ of the oscillator~\cite{mauroy2018global, wilson2016isostable,  monga2019phase, monga2019optimal},
as
\begin{align}
	\label{eq:koopmaniw}
	\Phi_0({\bm X}) = \mbox{\rm Arg}\ \Psi_0({\bm X}),
\end{align}
where $\mbox{\rm Arg}$ represents the principal argument of a complex number in the range $[0, 2\pi)$.
Moreover, the~asymptotic amplitude function $R_0({\bm X})$ is nothing but the eigenfunction of the linear operator ${A}$ associated with the eigenvalue $ \lambda_1 $ for ${\bm X} \in B_\chi$, i.e.,
\begin{align}
	{A} R_0({\bm X}) = \lambda_1 R_0({\bm X}).
	\label{eq:koopmanamp}
\end{align}

Thus, the~Koopman operator theory provides a natural and unified definition of the asymptotic phase and amplitude, and~the simplified Equation~(\ref{phase_amp}) in the phase-amplitude coordinates can be interpreted as a global linearization of the nonlinear dynamics of the limit-cycle oscillator by using the Koopman eigenfunctions.
In References~\cite{mauroy2012use, mauroy2018global}, Mauroy and Mezi\'{c} pointed out these facts and explicitly calculated the phase and amplitude functions for several models of limit-cycle~oscillators.

%%%%%%%%%%%%%%%%%%%%%%%%%%%%%%%%%%%%%%%%%

\section{Fokker--Planck Equation and Stochastic Koopman~Operator}
\unskip

\subsection{Forward and Backward Fokker--Planck~Equations}

In the previous section, we considered deterministic limit-cycle oscillators and introduced the asymptotic phase and amplitude functions from the Koopman-operator viewpoint. Our aim in this study is to generalize the idea to stochastic oscillatory systems. In~this section, we review some basic facts on the Fokker--Planck equations and stochastic Koopman operator for stochastic dynamical~systems.

We consider a stochastic dynamical system described by a time-homogeneous SDE of Ito type~\cite{arnold1974stochastic,gardiner2009stochastic,pavliotis2014stochastic},
\begin{align}
	d{\bm X}(t) = {\bm A}({\bm X}(t)) dt + {\bm B}({\bm X}(t)) d{\bm W}(t),
	\label{sde}
\end{align}
where $\bm{X}(t) \in \mathbb R^{N}$ is the system state at time $t$, ${\bm A} : {\mathbb R}^N \to {\mathbb R}^N$ is a drift term representing the deterministic vector field of the oscillator, ${\bm B} : {\mathbb R}^N \to {\mathbb R}^{N \times N}$ is a matrix characterizing the intensity of the noise, and~${\bm W}(t)$ is a Wiener process in ${\mathbb R}^N$ representing the $N$-dimensional independent Gaussian-white noise.
We assume that ${\bm A}$ and ${\bm B}$ satisfy the Lipschitz condition $| {\bm A}({\bm X}) - {\bm A}({\bm Y}) | + | {\bm B}({\bm X}) - {\bm B}({\bm Y}) | \leq K | {\bm X} - {\bm Y} |$ and the growth condition $|{\bm A}({\bm X})|^2 + |{\bm B}({\bm X})|^2 \leq K^2 | ( 1 + |{\bm X}|^2 )$ with some constant $K$ for Equation~(\ref{sde}) to possess a unique strong solution ${\bm X}(t)$~\cite{arnold1974stochastic,pavliotis2014stochastic}.

The FPE equivalent to the above SDE, describing the time evolution of the probability density function (PDF) $p({\bm X}, t) : {\mathbb R}^N \times {\mathbb R} \to {\mathbb R}$ of ${\bm X}$ at time $t$ is given by 
\begin{align}
	\label{eq:fpe}
	\frac{\pa}{\pa t} p({\bm X}, t) = {L}_{\bm X} p({\bm X}, t)
	= \left[ - \frac{\pa}{\pa {\bm X}} {\bm A}({\bm X}) + \frac{1}{2} \frac{\pa^2}{\pa {\bm X}^2} {\bm D}({\bm X}) \right] p({\bm X}, t),
\end{align}
where
\begin{align}
	\label{eq:fpe2}
	{L}_{\bm X} = - \frac{\pa}{\pa {\bm X}} {\bm A}({\bm X}) + \frac{1}{2} \frac{\pa^2}{\pa {\bm X}^2} {\bm D}({\bm X})
\end{align}
is a (forward) Fokker--Planck operator.
Here, the~drift vector ${\bm A} : {\mathbb R}^N \to {\mathbb R}^N$ is the same as in Equation~(\ref{sde})
and ${\bm D} = \bm{B} \bm{B}^{\sf T} : {\mathbb R}^N \to {\mathbb R}^{N \times N}$
is a symmetric diffusion matrix, where ${\sf T}$ indicates matrix transposition.
We also assume that the functions ${\bm A}({\bm X})$ and ${\bm D}({\bm X})$ are smooth, satisfy the growth conditions $| {\bm D}({\bm X}) | \leq M$, $|- {\bm A}({\bm X}) + \nabla \cdot {\bm D}({\bm X}) | \leq M (1 + | {\bm X} | )$, and~$|- \nabla \cdot {\bm A}({\bm X}) + (1/2) \nabla \cdot (\nabla \cdot {\bm D}({\bm X}) ) | \leq M (1 + | {\bm X} |^2 )$ with some constant $M$, and~the uniform parabolicity $({\bm \lambda} \cdot {\bm D}( {\bm X}) {\bm \lambda}) \geq \alpha | {\bm \lambda} |^2$ for all ${\bm \lambda} \in {\mathbb R}^N$ with a constant $\alpha > 0$
in order that Equation~(\ref{eq:fpe}) possesses
a classical solution for $t>0$~\cite{arnold1974stochastic,lasota2008probabilistic,lasota2013chaos,pavliotis2014stochastic}.

The transition probability density $p(\bm {X}, t| \bm {Y}, s)$, satisfying $p({\bm X}, t) =$ 
	$\int p(\bm {X}, t| \bm {Y}, s)$ $p({\bm Y}, s) d{\bm Y}$ for $t > s$ and $\lim_{t \to s+0} p({\bm Y}, t | {\bm X}, s) = \delta({\bm X} - {\bm Y})$ where $\delta({\bm X}-{\bm Y})$ is Dirac's delta measure~\cite{arnold1974stochastic,gardiner2009stochastic,pavliotis2014stochastic}, obeys the forward FPE
\begin{align}
	\frac{\pa}{\pa t} p(\bm {X}, t| \bm {Y}, s) = {L}_{\bm X} p(\bm {X}, t| \bm {Y}, s)
	\label{forward}
\end{align}
and also the corresponding backward FPE
\begin{align}
	\frac{\pa}{\pa s} p(\bm {X}, t| \bm {Y}, s) &= -{L}^{+}_{\bm Y} p(\bm {X}, t| \bm {Y}, s) 
	=- \left[ {\bm A}({\bm Y}) \frac{\pa}{\pa {\bm Y}} + \frac{1}{2} {\bm D}({\bm Y}) \frac{\pa^2}{\pa {\bm Y}^2}  \right] p(\bm {X}, t| \bm {Y}, s).
	\label{bkfpe}
\end{align}

Here, the~backward Fokker--Planck operator
\begin{align}
	{L}_{\bm X}^{+} = {\bm A}({\bm X}) \frac{\pa}{\pa {\bm X}} + \frac{1}{2} {\bm D}({\bm X}) \frac{\pa^2}{\pa {\bm X}^2} 
	\label{eq:Lxadj}
\end{align}
is the adjoint linear operator of ${L}_{\bm X}$
with respect to the $L^2$ inner product
\begin{align}
	\mean{G(\bm{X}), H(\bm{X})} = \int \overline{ G(\bm{X}) } H(\bm{X}) d \bm{X}
\end{align}
of two functions $G, H : {\mathbb R}^N \to {\mathbb C}$, i.e.,~
\begin{align}
	\mean{ {L}^{+}_{\bm{X}} G(\bm{X}), H(\bm{X})}
	=\mean {G(\bm{X}), {L}_{\bm X} H(\bm{X})},
\end{align}
where the overline indicates complex conjugate and the integration is taken over the whole range of ${\bm X}$ here and~hereafter. 

\subsection{Eigensystem of the Fokker--Planck~Operators}

The linear differential operators ${L}_{\bm X}$ and ${L}_{\bm X}^{+}$ have the eigensystem $\{\Lambda_{k}, P_{k}, \overline{Q_{k}}\}_{k\geq 0}$ consisting of the eigenvalue $\Lambda_{k}$ and eigenfunctions $P_k({\bm X})$, $\overline{Q_k}({\bm X})$ satisfying
\begin{align}
	&{L}_{\bm X} P_{k}(\bm{X}) = \Lambda_k P_{k}(\bm{X}),
	\cr
	&{L}_{\bm X}^{+} \overline{ Q^{}_{k} }(\bm{X}) =  \Lambda_k \overline{ Q_{k}} (\bm{X}),
\end{align}
and the biorthogonality conditions
\begin{align}
	&\mean{ Q_{k}(\bm{X}), P_{l}(\bm{X})} = \delta_{kl},
\end{align}
where $k, l = 0, 1, 2, \ldots$ and $\delta_{kl}$ represents Kronecker's delta~\cite{gardiner2009stochastic,risken1996fokker,thomas2014asymptotic}.
Here, $Q_k({\bm X})$ is the complex conjugate of $\overline{Q_k}({\bm X})$, which is an eigenfunction of ${L}_{\bm X}^{+}$ 
associated with the eigenvalue $\overline{\Lambda_k}$, i.e.,~${L}_{\bm X}^{+} Q^{}_{k}(\bm{X}) = \overline{ \Lambda_k } Q_{k}(\bm{X})$.
Because $L_{\bm X}$ is a Fokker--Planck operator, the~eigenvalue $\Lambda_0$ is zero and the associated eigenfunction $P_0({\bm X})$ gives the stationary PDF of the FPE, i.e.,~$L_{\bm X} P_0({\bm X}) = 0$, when appropriately normalized. All other eigenvalues have negative real parts and the associated eigenfunctions represent the relaxation eigenmodes of the FPE that eventually decay as $t \to \infty$~\cite{gardiner2009stochastic,risken1996fokker,thomas2014asymptotic}.

\subsection{Stochastic Koopman~Operator}

We here introduce the stochastic Koopman operator following Mezi\'c~\cite{mezic2005spectral} and discuss its relationship with the backward Fokker--Planck~operator.

\begin{Definition} For an observable $g : {\mathbb R}^{N} \to {\mathbb C}$ and $\tau > 0$, the~stochastic Koopman operator $U_{st}^{\tau}$ is defined as
	\begin{align}
		U_{st}^{\tau} g({\bm X}) = {\mathbb E} [ g(S_{st}^{\tau} {\bm X}) ] = \int p({\bm Y}, \tau | {\bm X}, 0) g({\bm Y}) d{\bm Y}
		\label{defstochastickoopman}
	\end{align}
	for ${\bm X} \in {\mathbb R}^N$, where ${\mathbb E} [ \cdot ]$ represents the expectation over realizations of the stochastic flow $S_{st}^{\tau}$ of Equation~(\ref{sde}) and $p({\bm Y}, \tau | {\bm X}, 0)$ is the transition probability density satisfying
	Equation~(\ref{forward}).
\end{Definition}

In the second expression of Equation~(\ref{defstochastickoopman}), the~expectation ${\mathbb E} [ g(S_{st}^{\tau} {\bm X}) ]$ is represented as an average over the transition probability density $p({\bm Y}, \tau | {\bm X}, 0)$. The~initial time can be taken as $0$ without loss of generality because the process is time-homogeneous.
We also introduce the infinitesimal generator of the stochastic Koopman~operator.

\begin{Definition}
	For an observable $g : {\mathbb R}^{N} \to {\mathbb C}$, the~infinitesimal generator $A_{st}$ of the stochastic Koopman operator $U_{st}^{\tau}$ is defined by
	\begin{align}
		A_{st} g({\bm X}) &= \lim_{\tau \to +0} \frac{ U_{st}^{\tau} g({\bm X}) - g({\bm X}) } {\tau}.
	\end{align}
\end{Definition}

From the above definitions, it can be shown that the infinitesimal generator of the stochastic Koopman operator is given by the backward Fokker--Planck~operator.

\begin{Lemma}
	The infinitesimal generator $A_{st}$ of the stochastic Koopman operator $U_{st}^\tau$ 
	%in \mbox{Equation~(\ref{defstochastickoopman})} 
	is given by the backward Fokker--Planck operator $L_{\bm X}^+$ in Equation~(\ref{eq:Lxadj}).
\end{Lemma}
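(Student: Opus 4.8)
The plan is to obtain the generator by differentiating the stochastic Koopman semigroup at $\tau=0$ and to identify the result with $L_{\bm X}^+$ via the backward Fokker--Planck equation \eref{bkfpe} that the transition density already satisfies; a fully equivalent route applies It\^o's formula directly. I sketch both.

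\emph{Backward-equation route.} First I would use time-homogeneity to write the transition density in \eref{defstochastickoopman} as a function of the time difference alone, $p(\bm Y, \tau | \bm X, 0) = p(\bm Y, t | \bm X, t-\tau)$. The backward equation \eref{bkfpe} differentiates with respect to the \emph{initial} time with the operator acting on the \emph{initial} state $\bm X$; converting that initial-time derivative into a derivative in $\tau$ (which flips the sign) yields the forward-in-$\tau$ relation
\begin{align}
	\frac{\pa}{\pa \tau} p(\bm Y, \tau | \bm X, 0) = L_{\bm X}^+ \, p(\bm Y, \tau | \bm X, 0).
	\nonumber
\end{align}
Multiplying by $g(\bm Y)$, integrating over $\bm Y$, and exchanging the $\bm X$-differential operator $L_{\bm X}^+$ with the $\bm Y$-integration gives $\frac{\pa}{\pa \tau} U_{st}^\tau g(\bm X) = L_{\bm X}^+ U_{st}^\tau g(\bm X)$. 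Evaluating the one-sided derivative as $\tau \to +0$ and using $\lim_{\tau \to +0} U_{st}^\tau g = g$ (which follows from $\lim_{\tau \to +0} p(\bm Y, \tau | \bm X, 0) = \delta(\bm X - \bm Y)$) recovers $A_{st} g(\bm X) = L_{\bm X}^+ g(\bm X)$, as claimed.

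\emph{It\^o route.} Alternatively, applying It\^o's formula to $g(\bm X(t))$ along \eref{sde} gives $dg = (L_{\bm X}^+ g)(\bm X)\, dt + (\nabla g)^{\sf T} \bm B(\bm X)\, d\bm W$, since the drift of $dg$ is precisely $L_{\bm X}^+ g$ in the notation of \eref{eq:Lxadj}. Taking the expectation with $\bm X(0) = \bm X$ annihilates the martingale term, so that $U_{st}^\tau g(\bm X) - g(\bm X) = \int_0^\tau (U_{st}^s L_{\bm X}^+ g)(\bm X)\, ds$. Dividing by $\tau$ and letting $\tau \to +0$, the Lebesgue differentiation theorem together with continuity of $s \mapsto (U_{st}^s L_{\bm X}^+ g)(\bm X)$ at $s=0$ again yields $A_{st} g = L_{\bm X}^+ g$.

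The main obstacle is analytic rather than algebraic: justifying the interchange of the $\bm X$-differential operator $L_{\bm X}^+$ with the $\bm Y$-integration (equivalently, differentiation under the integral sign), and confirming that the pointwise one-sided $\tau$-derivative at $0$ coincides with the generator limit. Both steps rely on the smoothness and growth hypotheses on $\bm A$ and $\bm D$ and on the uniform parabolicity assumed after \eref{eq:fpe2}, which guarantee that $p(\bm Y, \tau | \bm X, 0)$ is a smooth classical solution of the backward equation with enough decay in $\bm Y$ for the integrals and their $\bm X$-derivatives to converge. The observable $g$ must additionally be restricted to the domain of $A_{st}$ (e.g.\ $C^2$ with controlled growth) so that It\^o's formula applies and the stochastic integral is a genuine zero-mean martingale.
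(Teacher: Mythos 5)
Your proposal is correct, but note that the paper does not actually prove this lemma: it defers entirely to {\O}ksendal's textbook (Section 7.3, Theorem 7.3.3, the generator of an It\^o diffusion). The cited proof is precisely your second route---apply It\^o's formula to $g(\bm X(t))$ along \eref{sde}, observe that the drift of $dg$ is $L_{\bm X}^+ g$, take expectations so that the stochastic-integral (martingale) term vanishes, and pass to the limit $\tau \to +0$ in the resulting Dynkin-type identity---so your It\^o route reproduces the paper's referenced argument. Your backward-equation route, by contrast, is a genuinely different and self-contained derivation that stays inside the paper's own PDE framework: it uses only time-homogeneity, the backward equation \eref{bkfpe}, and differentiation under the integral sign, leaning on the smoothness, growth, and uniform-parabolicity hypotheses stated after \eref{eq:fpe2} which guarantee that the transition density is a classical solution. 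The trade-off is clear: the It\^o route requires no existence theory for the transition density but needs $g$ restricted to a class (say $C^2$ with compact support or controlled growth) on which It\^o's formula applies and the stochastic integral is a true zero-mean martingale, whereas the backward route avoids path-space arguments entirely but must justify interchanging the $\bm X$-differential operator $L_{\bm X}^+$ with the $\bm Y$-integration and the passage from the $\tau$-derivative at $\tau > 0$ to the one-sided generator limit at $\tau = 0$---both of which you correctly identify as the genuine analytic content rather than gaps in the algebra.
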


The proof can be found in the textbook by {\O}ksendal~\cite{oksendal2000stochastic} (Section 7.3, The generator of an Ito diffusion, Theorem 7.3.3).\\

Thus, the~infinitesimal generator of the stochastic Koopman operator is given by the backward Fokker--Planck operator, i.e.,~$A_{st} = L_{\bm X}^+$. 
Before proceeding to the definition of the asymptotic phase and amplitude, we show a result on the time evolution  of the average of the eigenfunction $\overline{Q_k}$ $(k=0, 1, 2, \cdots)$ of $A_{st} = L_{\bm X}^+$.

\begin{Lemma}
	Let ${\bm X}(t) = S_{st}^t {\bm X}_0$ be a solution to
	Equation~(\ref{sde}) with
	an initial condition ${\bm X}_0 \in {\mathbb R}^N$, where $S_{st}^t : {\mathbb R}^N \to {\mathbb R}^N$ ($t \geq 0$) is the stochastic flow of Equation~(\ref{sde}). Then, the~average
	\begin{align}
		\mathbb{E}[  \overline{Q_k}(S_{st}^t {\bm X}_0)  ]
		=
		\int \overline{Q_k}({\bm X})  p({\bm X}, t | {\bm X}_0, 0) d{\bm X}
	\end{align}
	of $\overline{Q_k}({\bm X}(t)) =  \overline{Q_k}(S_{st}^t {\bm X}_0)$ obeys
	\begin{align}
		\frac{d}{dt} 
		\mathbb{E}[  \overline{Q_k}(S_{st}^t {\bm X}_0)  ]
		=
		\Lambda_k
		\mathbb{E}[  \overline{Q_k}(S_{st}^t {\bm X}_0)  ]
	\end{align}
	for arbitrary ${\bm X}_0$, where ${\mathbb E} [ \cdot ]$ represents the expectation over realizations of the stochastic flow $S_{st}^{t}$ and $p({\bm X}, t | {\bm X}_0, 0)$ is the transition probability density satisfying
	Equation~(\ref{forward}).
\end{Lemma}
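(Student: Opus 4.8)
The plan is to express the average as a stochastic Koopman average, differentiate it in $t$ through the forward Fokker--Planck equation, and then transfer the Fokker--Planck operator onto the eigenfunction via the adjoint (transpose) relation, so that the eigenvalue equation $L_{\bm X}^{+}\overline{Q_k}=\Lambda_k\overline{Q_k}$ can be applied directly. First I would start from the integral representation given in the statement and move the time derivative inside the integral. Since only the transition density $p({\bm X},t|{\bm X}_0,0)$ depends on $t$, and since it satisfies the forward FPE \eref{forward}, I can replace $\pa_t p$ by $L_{\bm X}p$.

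The central step is the integration by parts that moves $L_{\bm X}$ off the density and onto $\overline{Q_k}$. Because ${\bm A}$ and ${\bm D}$ are real, $L_{\bm X}$ and $L_{\bm X}^{+}$ are formal transposes as well as $L^2$ adjoints, so $\int \overline{Q_k}({\bm X})\,L_{\bm X}p\,d{\bm X}=\int \bigl(L_{\bm X}^{+}\overline{Q_k}({\bm X})\bigr)\,p\,d{\bm X}$ (equivalently, one may route through the stated $L^2$ adjoint relation using the conjugate eigenfunction $Q_k$, an eigenfunction of $L_{\bm X}^{+}$ with eigenvalue $\overline{\Lambda_k}$, which yields the same result). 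Applying $L_{\bm X}^{+}\overline{Q_k}=\Lambda_k\overline{Q_k}$ and pulling the constant $\Lambda_k$ out of the integral returns $\Lambda_k$ times the original average. Concretely, the chain reads
\begin{align}
	\frac{d}{dt}\,\mathbb{E}[\overline{Q_k}(S_{st}^t {\bm X}_0)]
	&= \int \overline{Q_k}({\bm X})\,\frac{\pa}{\pa t}p({\bm X},t|{\bm X}_0,0)\,d{\bm X} \cr
	&= \int \overline{Q_k}({\bm X})\,L_{\bm X}\,p({\bm X},t|{\bm X}_0,0)\,d{\bm X} \cr
	&= \int \bigl(L_{\bm X}^{+}\overline{Q_k}({\bm X})\bigr)\,p({\bm X},t|{\bm X}_0,0)\,d{\bm X} \cr
	&= \Lambda_k \int \overline{Q_k}({\bm X})\,p({\bm X},t|{\bm X}_0,0)\,d{\bm X}
	= \Lambda_k\,\mathbb{E}[\overline{Q_k}(S_{st}^t {\bm X}_0)].
\end{align}

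An equivalent and more structural route uses the preceding Lemma (that $A_{st}=L_{\bm X}^{+}$): writing the average as $(U_{st}^t \overline{Q_k})({\bm X}_0)$ and invoking the semigroup property, one gets $\tfrac{d}{dt}U_{st}^t\overline{Q_k}=U_{st}^t A_{st}\overline{Q_k}=U_{st}^t(L_{\bm X}^{+}\overline{Q_k})=\Lambda_k U_{st}^t\overline{Q_k}$, since $A_{st}$ commutes with $U_{st}^t$ and $\overline{Q_k}$ is an eigenfunction of the generator.

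I expect the main obstacle to be the analytic justification rather than the formal algebra: justifying the interchange of $d/dt$ with the integral, and the vanishing of the boundary terms in the integration by parts. The eigenfunctions $\overline{Q_k}$ of the backward operator can grow at infinity while the transition density decays, so one must check that $\overline{Q_k}\,p$ and the associated probability-flux terms are integrable and vanish at the boundary. This is precisely where the standing smoothness, uniform-parabolicity, and growth assumptions on ${\bm A}$ and ${\bm D}$ (and the resulting decay of the classical solution $p$) are invoked; once they license the differentiation under the integral and the integration by parts, the eigenvalue relation makes the conclusion immediate.
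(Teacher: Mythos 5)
Your proposal is correct and follows essentially the same route as the paper's own proof: differentiate under the integral, substitute the forward Fokker--Planck equation for $\pa_t p$, transfer $L_{\bm X}$ onto $\overline{Q_k}$ via the adjoint relation, and apply the eigenvalue equation $L_{\bm X}^{+}\overline{Q_k}=\Lambda_k\overline{Q_k}$. Your additional remarks on the conjugation bookkeeping, the semigroup reformulation, and the analytic justification of the interchange and boundary terms go beyond what the paper states explicitly, but they do not change the argument.
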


\begin{proof}
	\begin{align}
		&
		\frac{d}{dt} \mathbb{E}[  \overline{Q_k}(S_{st}^t {\bm X}_0)  ]
		= \frac{d}{dt} \int {\overline{Q_k}({\bm X})} p({\bm X}, t| {\bm X}_0, 0) d{\bm X}
		\cr
		&= \int {\overline{Q_k}({\bm X})} \frac{\partial}{\partial t} p({\bm X}, t| {\bm X}_0, 0) d{\bm X}
		= \int {\overline{Q_k}({\bm X})} L_{\bm X} p({\bm X}, t| {\bm X}_0, 0) d{\bm X} 
		\cr 
		&	= \int L_{\bm X}^+ {\overline{Q_k}({\bm X})} p({\bm X}, t| {\bm X}_0, 0) d{\bm X}
		= \int \Lambda_k \overline{Q_k}({\bm X}) p({\bm X}, t| {\bm X}_0, 0) d{\bm X}
		\cr
		&=  \Lambda_k \mathbb{E}[  \overline{Q_k}(S_{st}^t {\bm X}_0)  ].
	\end{align}
\end{proof}

We use the above result for discussing the evolution of the averaged phase and amplitude in the next~section.

\section{Phase and Amplitude for Stochastic Oscillatory~Systems}
\label{sec:stochastic}
\unskip

\subsection{Stochastic Oscillatory~Systems}

The definitions of the phase and amplitude in Section~\ref{sec:deterministic} are based on the deterministic limit-cycle solution.
These definitions are still applicable to noisy limit-cycle oscillators when the noise can be regarded as a weak perturbation~\cite{winfree2001geometry, kuramoto1984chemical, pikovsky2001synchronization, nakao2016phase, ermentrout2010mathematical}.
However, they are no longer valid when the oscillator is subjected to stronger noise because we cannot rely on the deterministic limit-cycle solution in defining the phase and amplitude~functions.

In Reference~\cite{thomas2014asymptotic}, Thomas and Lindner proposed a definition of the asymptotic phase for strongly stochastic oscillators without relying on the limit-cycle solution of the deterministic system, where they used the slowest decaying eigenfunction of the backward Fokker--Planck operator as the phase function based on the consideration of the mean first passage time.
In this section, we show that their definition can be viewed as a natural extension of the deterministic definition in the sense that it is given by the argument of the Koopman eigenfunction associated with the fundamental frequency~\cite{engel2021random,kato2020quantum}.

\subsection{Assumptions on the~Eigenvalues}

Since we consider \textit{oscillatory} stochastic systems, we introduce the following assumptions on the eigenvalues of the Fokker--Planck operator $L_{\bm X}$ in Equation~(\ref{eq:fpe2}).

(i) We assume that the eigenvalues with the largest non-zero real part are given by a complex-conjugate pair, i.e.,~the slowest decaying eigenmode is oscillatory, and~regard this eigenmode as the fundamental oscillation of the system.
These eigenvalues are represented~	as
\begin{align}
	\Lambda_1 = \mu_1 + i \omega_1,
	\quad
	\overline{ \Lambda_1 } = \mu_1 - i \omega_1,
\end{align}
where $\mu_1 < 0$ and $\omega_1 > 0$ characterize the decay rate and {\it fundamental frequency} of the %fundamental 
oscillation, respectively.

(ii) We assume that the largest non-zero eigenvalue on the real axis, denoted as $\Lambda_2$, is smaller than $\mu_1$, i.e.,~$\Lambda_2 < \mbox{Re}~\Lambda_1$, and~consider that this eigenvalue characterizes the {\it decay rate} of the amplitude of the system, i.e.,~the deviation of the system state from the averaged oscillatory~state.

When the system has a stable limit cycle as discussed in the previous section in the limit of vanishing noise intensity, these eigenvalues are expected to converge to $i \omega$ and 
$\lambda_1$ of the deterministic limit cycle in the limit of vanishing noise, i.e.,~$\omega_1 \to \omega$ and $\Lambda_2 \to \lambda_1$.

\subsection{Definition of the Asymptotic Phase~Function}

Thomas and Lindner~\cite{thomas2014asymptotic} defined an asymptotic phase 
for the stochastic oscillatory system, Equation~(\ref{sde}),
by using the argument of the (complex conjugate of the) eigenfunction 
$\overline{Q_1}({\bm X})$ of the backward Fokker--Planck operator ${L}_{\bm X}^{+}$ associated with the eigenvalue $\Lambda_1$ characterized by the fundamental frequency $\omega_1$ (in the notation of the present study), satisfying $L_{\bm X}^+ \overline{Q_1}({\bm X}) = \Lambda_1 \overline{Q_1}({\bm X})$, as~
\begin{align}
	\Phi(\bm{X}) = \mbox{\rm Arg}\  \overline{Q_1}({\bm X}),
	\label{eq:TLphase}
\end{align}
and showed that this $\Phi({\bm X})$ gives an appropriate phase value that increases with a constant frequency $\omega_1$ with the evolution of ${\bm X}$ on average. 
They showed that, in~the limit of vanishing noise where the system is described by the vector field ${\bm A}({\bm X})$ possessing a stable limit-cycle solution, this definition of the asymptotic phase coincides with the deterministic definition in Section~\ref{sec:deterministic}~\cite{thomas2014asymptotic}.
In Reference~\cite{kato2020quantum}, we pointed out that the above definition of the phase function by the backward Fokker--Planck operator can also be understood from the viewpoint of the Koopman operator theory.
In what follows, we introduce the asymptotic phase and also the amplitude from the Koopman-operator~viewpoint.\\

Let us rephrase the above definition of the asymptotic phase ${\Phi}({\bm X})$ for stochastic oscillators 
from the Koopman-operator~viewpoint.

\begin{Definition}
	We define the asymptotic phase ${\Phi}({\bm X})$ of the oscillator state ${\bm X} \in {\mathbb R}^N$ described by Equation~(\ref{sde}) by using the eigenfunction $\overline{ Q_1 }(\bm{X})$ of the infinitesimal generator of the Koopman operator $A_{st} = L_{\bm X}^+$ in Equation~(\ref{eq:Lxadj}) associated with the eigenvalue $\Lambda_1$ as 
	\begin{align}
		\Phi(\bm{X}) = \mbox{\rm Arg}\ \overline{ Q_1  }(\bm{X}).
		\label{eq:TLphase2}
	\end{align}
\end{Definition}

Note that the above phase $\Phi({\bm X})$ has a discontinuity at $2\pi$, which causes difficulty in taking ensemble averages of $\Phi({\bm X})$ over realizations of ${\bm X}$.
Rather, as~in the standard convention in directional statistics~\cite{fisher1995statistical}, we consider a `wrapped' distribution of the phase values and use the circular mean to calculate the average phase.
This is accomplished by taking the ensemble average of $\overline{Q_1}({\bm X})$ over many realizations and then calculate its argument, rather than calculating the ensemble average of $ \mbox{\rm Arg}\ \overline{ Q_1  }(\bm{X})$.

\begin{Definition}
	We define the averaged asymptotic phase of the stochastic oscillator described by Equation~(\ref{sde}) at time $t$, started from an initial condition ${\bm X}_0$ at time $0$, as~%
	\begin{align}
		\phi(t ; {\bm X}_0)
		=
		\mbox{\rm Arg} \ \mathbb{E}[  \overline{Q_1}(S_{st}^t {\bm X}_0)  ]
		=
		\mbox{\rm Arg} \int \overline{Q_1}({\bm X})  p({\bm X}, t | {\bm X}_0, 0) d{\bm X},
		\label{avgasymphase}
	\end{align}
	where ${\mathbb E} [ \cdot ]$ represents the expectation over realizations of the stochastic flow $S_{st}^{t}$ of Equation~(\ref{sde}) and $p({\bm X}, t | {\bm X}_0, 0)$ is the transition probability density satisfying
	Equation~(\ref{forward}).
\end{Definition}

Let us confirm that the above definition of the phase function 
%using the Koopman eigenfunction $ \overline{Q_1}({\bm X})$ 
yields appropriate phase values on average. 

\begin{Lemma}
	The average asymptotic phase in Equation~(\ref{avgasymphase}) increases with a constant frequency $\omega_1$, i.e.,
	\begin{align}
		\frac{d}{dt} \phi(t ; {\bm X}_0) 
		= \omega_1,
		\label{lemma2}
	\end{align}
	for arbitrary ${\bm X}_0 \in {\mathbb R}^N$.
\end{Lemma}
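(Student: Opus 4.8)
The plan is to show that the averaged asymptotic phase $\phi(t;\bm{X}_0) = \mbox{Arg}\,\mathbb{E}[\overline{Q_1}(S_{st}^t \bm{X}_0)]$ increases linearly with slope $\omega_1$ by exploiting the eigenfunction evolution already established in Lemma~2. First I would recall that Lemma~2, applied with $k=1$, tells us that the complex-valued average $m(t) := \mathbb{E}[\overline{Q_1}(S_{st}^t \bm{X}_0)]$ satisfies the scalar linear ODE
\begin{align}
	\frac{d}{dt} m(t) = \Lambda_1 m(t),
\end{align}
which integrates immediately to $m(t) = m(0)\, e^{\Lambda_1 t}$. Substituting $\Lambda_1 = \mu_1 + i\omega_1$ from assumption~(i), this factors as $m(t) = m(0)\, e^{\mu_1 t} e^{i\omega_1 t}$.

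The key step is then to extract the argument. Writing the (nonzero) initial value in polar form as $m(0) = |m(0)|\, e^{i\theta_0}$, we obtain $m(t) = |m(0)|\, e^{\mu_1 t}\, e^{i(\omega_1 t + \theta_0)}$. The modulus $|m(0)|\, e^{\mu_1 t}$ is real and positive, so it does not contribute to the argument, and therefore
\begin{align}
	\phi(t;\bm{X}_0) = \mbox{Arg}\, m(t) = \omega_1 t + \theta_0 \pmod{2\pi}.
\end{align}
Differentiating in $t$ gives $\frac{d}{dt}\phi(t;\bm{X}_0) = \omega_1$, which is the claim of Equation~(\ref{lemma2}). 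Because the factorization and the resulting slope $\omega_1$ are independent of the particular $\bm{X}_0$ (which only enters through $m(0)$), the conclusion holds for arbitrary $\bm{X}_0 \in \mathbb{R}^N$.

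I expect the main obstacle to be a careful treatment of the argument function rather than the ODE itself. The function $\mbox{Arg}$ is only defined modulo $2\pi$ and has a jump discontinuity at the branch cut, so the identity $\frac{d}{dt}\mbox{Arg}\,m(t) = \omega_1$ must be understood as holding away from the instants where $\omega_1 t + \theta_0$ crosses a multiple of $2\pi$; at those isolated points the wrapped phase jumps by $2\pi$ but its underlying linear growth rate is still $\omega_1$. The paper's own remark preceding the statement—that one should take the circular mean via $\mbox{Arg}\,\mathbb{E}[\overline{Q_1}]$ rather than averaging the discontinuous $\mbox{Arg}\,\overline{Q_1}$—is precisely what makes this clean, since the modulus decay $e^{\mu_1 t}$ is harmless and the phase of a complex exponential advances uniformly. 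A secondary point worth noting is the implicit assumption that $m(0) = \mathbb{E}[\overline{Q_1}(\bm{X}_0)] = \overline{Q_1}(\bm{X}_0) \neq 0$, so that the argument is well-defined along the trajectory; this holds off the nodal set of $Q_1$, which is where the phase is meaningful in the first place.
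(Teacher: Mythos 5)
Your proof is correct and follows essentially the same route as the paper's: invoke Lemma~2 with $k=1$, integrate to get $\mathbb{E}[\overline{Q_1}(S_{st}^t {\bm X}_0)] = e^{(\mu_1 + i\omega_1)t}\,\mathbb{E}[\overline{Q_1}({\bm X}_0)]$, and observe that the positive real factor $e^{\mu_1 t}$ drops out of the argument, leaving $\phi(t;{\bm X}_0) = \omega_1 t + \phi(0;{\bm X}_0)$. Your additional remarks on the branch cut of $\mbox{Arg}$ and the implicit requirement $\mathbb{E}[\overline{Q_1}({\bm X}_0)] \neq 0$ are careful points the paper leaves tacit, but they do not change the argument.
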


\begin{proof}
	From Lemma 2, the~average $\mathbb{E}[  \overline{Q_1}(S_{st}^t {\bm X}_0)  ]$ of
	$\overline{Q_1}({\bm X}(t)) = \overline{Q_1}(S_{st}^t {\bm X}_0)$ obeys
	\begin{align}
		\frac{d}{dt}\mathbb{E}[  \overline{Q_1}(S_{st}^t {\bm X}_0)  ]
		= \Lambda_1 \mathbb{E}[  \overline{Q_1}(S_{st}^t {\bm X}_0)  ],
	\end{align}
	where $\Lambda_1 = \mu_1 + i \omega_1$.
	By integration, we obtain
	\begin{align}
		\mathbb{E}[  \overline{Q_1}(S_{st}^t {\bm X}_0)  ] = e^{(\mu_1 + i \omega_1)t} \mathbb{E}[  \overline{Q_1}( {\bm X}_0)  ],
	\end{align}
	where we used that $S^0_{st} {\bm X}_0 = {\bm X}_0$. The~averaged asymptotic phase is thus given by
	\begin{align}
		\phi(t ; {\bm X}_0) = \mbox{\rm Arg}\ \mathbb{E}[  \overline{Q_1}(S_{st}^t {\bm X}_0)  ] 
		=
		\omega_1 t + \mbox{\rm Arg}\ \mathbb{E}[  \overline{Q_1}({\bm X}_0) ]
		=
		\omega_1 t + \phi(0 ; {\bm X}_0),
	\end{align}
	which yields Equation~(\ref{lemma2}) by differentiation
	by $t$.
\end{proof}

Thus, the~averaged asymptotic phase $\phi(t; {\bm X}_0)$ of the oscillator satisfies
\begin{align}
	\dot\phi(t; {\bm X}_0) = \omega_1,
\end{align}
namely, $\phi(t; {\bm X}_0)$ increases with a constant frequency $\omega_1$ on average for any ${\bm X}_0$.
This result indicates that the definition of the asymptotic phase in Equation~(\ref{eq:TLphase}) for the stochastic oscillators by Thomas and Lindner~\cite{thomas2014asymptotic} is a natural extension of the definition in \mbox{Equation~(\ref{eq:koopmaniw})} for the deterministic oscillators from the Koopman-operator~viewpoint.

\subsection{Definition of the Amplitude~Function}

We have seen that the definition of the stochastic asymptotic phase by using the backward Fokker--Planck operator can be naturally interpreted as a generalization of the deterministic definition from the viewpoint of the Koopman operator theory.
Furthermore, as~explained in Section~\ref{sec:deterministic}, the~asymptotic amplitude can be naturally defined by using the Koopman eigenfunction associated with the largest non-zero real eigenvalue in deterministic systems.
Therefore, to~generalize the definition of the amplitude to stochastic oscillators, it appears natural to use the eigenfunction of the stochastic Koopman~operator.

\begin{Definition}
	We define the amplitude of the oscillator state ${\bm X} \in {\mathbb R}^N$ described by Equation~(\ref{sde}) by using the Koopman eigencfunction $\overline{ Q_2  }(\bm{X})$ of $L_{\bm X}^+$ in Equation~(\ref{eq:Lxadj}) associated with the largest non-zero real eigenvalue $\Lambda_2$ as 
	\begin{align}
		R(\bm{X}) =  \overline{ Q_2  }(\bm{X}).
		\label{eq:TLamp}
	\end{align}
\end{Definition}

Let us confirm that the above definition of the amplitude function %using the Koopman eigenfunction $R({\bm X}) = \overline{Q_2}({\bm X})$ 
yields appropriate amplitude values on~average. 

\begin{Definition}
	We define the averaged amplitude of the stochastic oscillator described by \mbox{Equation~(\ref{sde})} at time $t$, started from the initial condition ${\bm X}(0) = {\bm X}_0$ at time $0$, as~
	\begin{align}
		r(t ; {\bm X}_0)
		=
		\mathbb{E}[  \overline{Q_2}(S_{st}^t {\bm X}_0)  ]
		=
		\int \overline{Q_2}({\bm X})  p({\bm X}, t | {\bm X}_0, 0) d{\bm X},
		\label{avgasymamp}
	\end{align}
	where ${\mathbb E} [ \cdot ]$ represents the expectation over realizations of the stochastic flow $S_{st}^{t}$ of Equation~(\ref{sde}) and $p({\bm X}, t | {\bm X}_0, 0)$ is the transition probability density satisfying
	Equation~(\ref{forward}).
\end{Definition}

\begin{Lemma}
	The averaged amplitude $r(t ; {\bm X}_0)$ in Equation~(\ref{eq:TLamp}) decays at a constant rate $\Lambda_2$, i.e.,
	\begin{align}
		\frac{d}{dt} r(t ; {\bm X}_0) = \Lambda_2 r(t ; {\bm X}_0),
	\end{align}
	for arbitrary ${\bm X}_0 \in {\mathbb R}^N$.
\end{Lemma}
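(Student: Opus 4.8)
The plan is to mirror the proof of Lemma 2 exactly, since the averaged amplitude $r(t;{\bm X}_0)$ is defined as the expectation of $\overline{Q_2}(S_{st}^t {\bm X}_0)$, which is precisely the object whose time evolution Lemma 2 controls. First I would invoke Lemma 2 directly with $k=2$, which states that the average of the eigenfunction $\overline{Q_2}$ of $A_{st} = L_{\bm X}^+$ evolves according to its eigenvalue $\Lambda_2$. This immediately yields
\begin{align}
	\frac{d}{dt} \mathbb{E}[ \overline{Q_2}(S_{st}^t {\bm X}_0) ]
	= \Lambda_2 \, \mathbb{E}[ \overline{Q_2}(S_{st}^t {\bm X}_0) ].
\end{align}

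Recalling from Definition 6 that $r(t;{\bm X}_0) = \mathbb{E}[ \overline{Q_2}(S_{st}^t {\bm X}_0) ]$, I would then substitute this identification into both sides of the relation above, giving $\frac{d}{dt} r(t;{\bm X}_0) = \Lambda_2 \, r(t;{\bm X}_0)$, which is exactly the claimed statement. Because $\Lambda_2$ is by assumption (ii) the largest non-zero \emph{real} eigenvalue, the decay rate is real, so the averaged amplitude relaxes monotonically rather than oscillating; this is the natural stochastic analogue of the deterministic relation $\dot r = \lambda_1 r$ in Equation~(\ref{phase_amp}).

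The contrast with Lemma 4 is worth noting as a guide to what is and is not needed here. In the phase case, Lemma 2 supplied an evolution law for a complex-valued average, and an additional step was required to extract the argument and show it grows linearly in $t$; the nonzero real part $\mu_1$ there affects only the modulus and drops out of the phase. For the amplitude, no such extraction is needed: $R(\bm{X})$ is defined as $\overline{Q_2}(\bm X)$ itself (not its argument), so the averaged amplitude is the raw complex average and the eigenvalue relation from Lemma 2 \emph{is} the desired statement with no post-processing.

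There is essentially no obstacle in this proof; the entire content is the direct application of Lemma 2 at $k=2$ together with the definition of $r(t;{\bm X}_0)$. If anything, the only point deserving a remark is that the argument is valid for arbitrary ${\bm X}_0 \in {\mathbb R}^N$ precisely because Lemma 2 holds for arbitrary initial conditions, so the constant-rate decay is a global property on the whole state space rather than merely a local one near an averaged oscillatory state. I would therefore keep the proof to a single displayed chain invoking Lemma 2 and the definition, exactly parallel to the opening of the proof of Lemma 4.
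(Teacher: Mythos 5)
Your proof is correct and is essentially identical to the paper's: both apply Lemma 2 with $k=2$ and use the definition $r(t;{\bm X}_0)=\mathbb{E}[\overline{Q_2}(S_{st}^t{\bm X}_0)]$ to conclude $\frac{d}{dt}r(t;{\bm X}_0)=\Lambda_2\, r(t;{\bm X}_0)$. The additional remarks on the contrast with the phase case are accurate but not needed for the argument.
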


\begin{proof}
	From Lemma 2, 
	\begin{align}
		&\frac{d}{dt}r(t ; {\bm X}_0) = \frac{d}{dt}\mathbb{E}[  \overline{Q_2}(S_{st}^t {\bm X}_0)  ] 
		= \Lambda_2 \mathbb{E}[  \overline{Q_2}(S_{st}^t {\bm X}_0)  ] = \Lambda_2 r(t ; {\bm X}_0).
	\end{align}
\end{proof}

Thus, the~averaged amplitude $r(t ; {\bm X}_0)$ obeys
\begin{align}
	\dot{r}(t ; {\bm X}_0) = \Lambda_2 r(t ; {\bm X}_0)
\end{align}
and hence
$	r(t ; {\bm X}_0) = e^{\Lambda_2 t} r(0 ; {\bm X}_0) $
as expected for any ${\bm X}_0$. Since $\Lambda_2$ is real and 
negative
by assumption, $r(t ; {\bm X}_0)$ decays exponentially with time.
This result indicates that the definition of the amplitude in Equation~(\ref{eq:TLamp}) for stochastic oscillators is a natural extension of the definition in Equation~(\ref{eq:koopmanamp}) for the deterministic systems from the Koopman-operator viewpoint.
As we illustrate in the next section with a few examples, the~above definition yields an amplitude value that decays linearly with $t$ on average and characterizes the deviation of the system state from the steady~oscillation.

%%%%%%%%%%%%%%%%%%%%%%%%%%%%%%%%%%%%%%%%%%%

\subsection{Limit of Vanishing Noise~Intensity}

Before proceeding to examples, we point out that the results for the stochastic oscillators formally reduce to the results for deterministic limit-cycle oscillators in the limit of vanishingly small~noise.

If we assume that the noise does not exist, i.e.,~${\bm D}({\bm X}) \to 0$ in the forward and backward Fokker--Planck Equations~(\ref{eq:fpe}) and (\ref{bkfpe}), we obtain the forward and backward Liouville equations~\cite{gaspard2005chaos,gardiner2009stochastic,lasota2013chaos,pavliotis2014stochastic},
\begin{align}
	\frac{\pa}{\pa t} p({\bm X}, t | {\bm Y}, s) 
	&= {\cal L}_{\bm X} p({\bm X}, t  | {\bm Y}, s),
	\cr
	\frac{\pa}{\pa s} p({\bm X}, t | {\bm Y}, s) &= - {\cal L}_{\bm Y}^{\dag} p({\bm X}, t | {\bm Y}, s),
\end{align}
where the forward Liouville operator is given by
\begin{align}
	{\cal L}_{\bm X} = - \frac{\pa}{\pa {\bm X}} {\bm A}({\bm X})
	\label{eq:clliouville}
\end{align}
and the backward Liouville operator is given by
\begin{align}
	{\cal L}_{\bm X}^{+} 
	= {\bm A}({\bm X}) \cdot \frac{\partial}{\partial {\bm X}}
	= {\bm A}({\bm X}) \cdot \nabla.
	\label{eq:bkclliouville}
\end{align}

Because the backward Liouville operator ${\cal L}_{\bm X}^{+}$ coincides with the infinitesimal generator of the Koopman operator ${A}$ in the deterministic case given in Equation~(\ref{koopmaninfini}), 
the Koopman eigenfunction $\Psi_0({\bm X})$ of $A$ in Equation~(\ref{eq:detiw}) 
is an eigenfunction of ${\cal L}_{\bm X}^{+}$ with an eigenvalue $i \omega$.
Thus, the~definition of the asymptotic phase for stochastic oscillators in Equation~(\ref{eq:TLphase2}) can be considered a natural generalization of the definition of the asymptotic phase for deterministic oscillators in Equation~(\ref{eq:koopmaniw}).
Similarly, 
the Koopman eigenfunction $R_0({\bm X})$ of $A$ in Equation~(\ref{eq:koopmanamp}) is an eigenfunction of ${\cal L}_{\bm X}^+$ with an eigenvalue $\Lambda_2 = \lambda_1$,
so the definition of the amplitude for stochastic oscillators in Equation~(\ref{eq:TLamp}) also corresponds to that for deterministic oscillators in Equation~(\ref{eq:koopmanamp}).

\section{Examples}
\unskip

\subsection{Numerical~Methods}

To demonstrate the validity of the phase and amplitude functions introduced in Section~\ref{sec:stochastic}, we consider two classical examples of noisy limit-cycle oscillators, i.e.,~the Stuart--Landau model~\cite{kuramoto1984chemical, nakao2016phase} and the FitzHugh--Nagumo model~\cite{fitzhugh1961impulses, nagumo1962active}.
We numerically calculate the eigenvalues and eigenfunctions of the backward Fokker--Planck operator and evaluate the phase and amplitude functions.
We also analyze a quantum limit-cycle oscillator in the semiclassical regime, which can be described by the same stochastic differential equations as those for the classical noisy limit-cycle~oscillators.

In the numerical calculations, we truncated the state space and approximated it as a finite square domain $ - D \leq x \leq D,  -D \leq y \leq D$ with a large enough value of $D$.
In all models considered, the~stationary PDF of the FPE rapidly decayed with the distance from the origin and took  numerically negligible values at the edges of the domain.
We discretized the domain into $N \times N$ grids and represented the PDF
as a $N^2$-dimensional vector.
We then represented the operator $L_{\bm X}^{+}$ as a $N^2 \times N^2$ matrix, calculated the eigenvalues and eigenvectors, and~obtained the phase and amplitude~functions.

\subsection{Example 1: Noisy Stuart--Landau~Model}

As the first example, we consider the Stuart--Landau model (the normal form of the supercritical Hopf bifurcation~\cite{guckenheimer1982nonlinear,kuramoto1984chemical}) under the effect of noise, described by
\begin{align}
	\label{eq:sl}
	dx &= \{ a x -  b y - ( cx - dy)(x^2+y^2)  \} dt + \sqrt{D_x} dW_x,
	\cr
	dy &= \{ b x + a y - ( dx + cy)(x^2+y^2)  \} dt  + \sqrt{D_y} dW_y,
\end{align}
where $x$ and $y$ are real variables, $a, b, c$, and~$d$ are real parameters, $W_x$ and $W_y$ are independent Wiener processes, and~$D_x$ and $D_y$ represent the intensities of the noise acting on $x$ and $y$, respectively.
The noiseless system with $D_x = D_y = 0$ has a stable limit cycle with a natural frequency $\omega = b - ad / c$ and the largest non-zero Floquet exponent $\lambda_1 = -2a$ when $a>0$ and $c>0$. For~this system, we can explicitly calculate the limit cycle and the phase and amplitude functions as~\cite{nakao2016phase}
\begin{align}
	&(x_0(\phi), y_0(\phi))^T = \sqrt{\frac{a}{c}} ( \cos \phi, \sin \phi )^T,
	\cr
	&\Phi_0(x, y) =  \tan^{-1} \left(\frac{y}{x}\right)-\frac{d}{c} \ln \sqrt{ \frac{ c}{a} (x^2 + y^2) },
	\cr
	&R_0(x, y) =  C_0 \left( c - \frac{a}{x^2 + y^2} \right).
	\label{deterministicSL}
\end{align}
where $C_0$ is an arbitrary scalar constant.
The basin $B_\chi$ of this limit cycle $\chi$ is the whole complex plane except the origin. 

In the following numerical simulations, we set the parameter values as $(a, b, c, d, D_x,$ $D_y)$ $= (0.5,$ $1.5$, $0.25$, $0.25$, $1$, $1)$. The~natural frequency and the largest non-zero Floquet exponent are $\omega = 1$ and $\lambda_1 = -1$, respectively.
It is noted that the fundamental frequency $\omega_1 = \mbox{Im}\ \Lambda_1$ and decay rate $\Lambda_2$ under the effect of noise are generally different from these deterministic values.
We used $D = 3.6$ and $N = 151$ for the numerical~analysis.

%\begin{figure}[H]
\begin{figure}[t]
	%\begin{center}
	\includegraphics[width=\hsize,keepaspectratio]{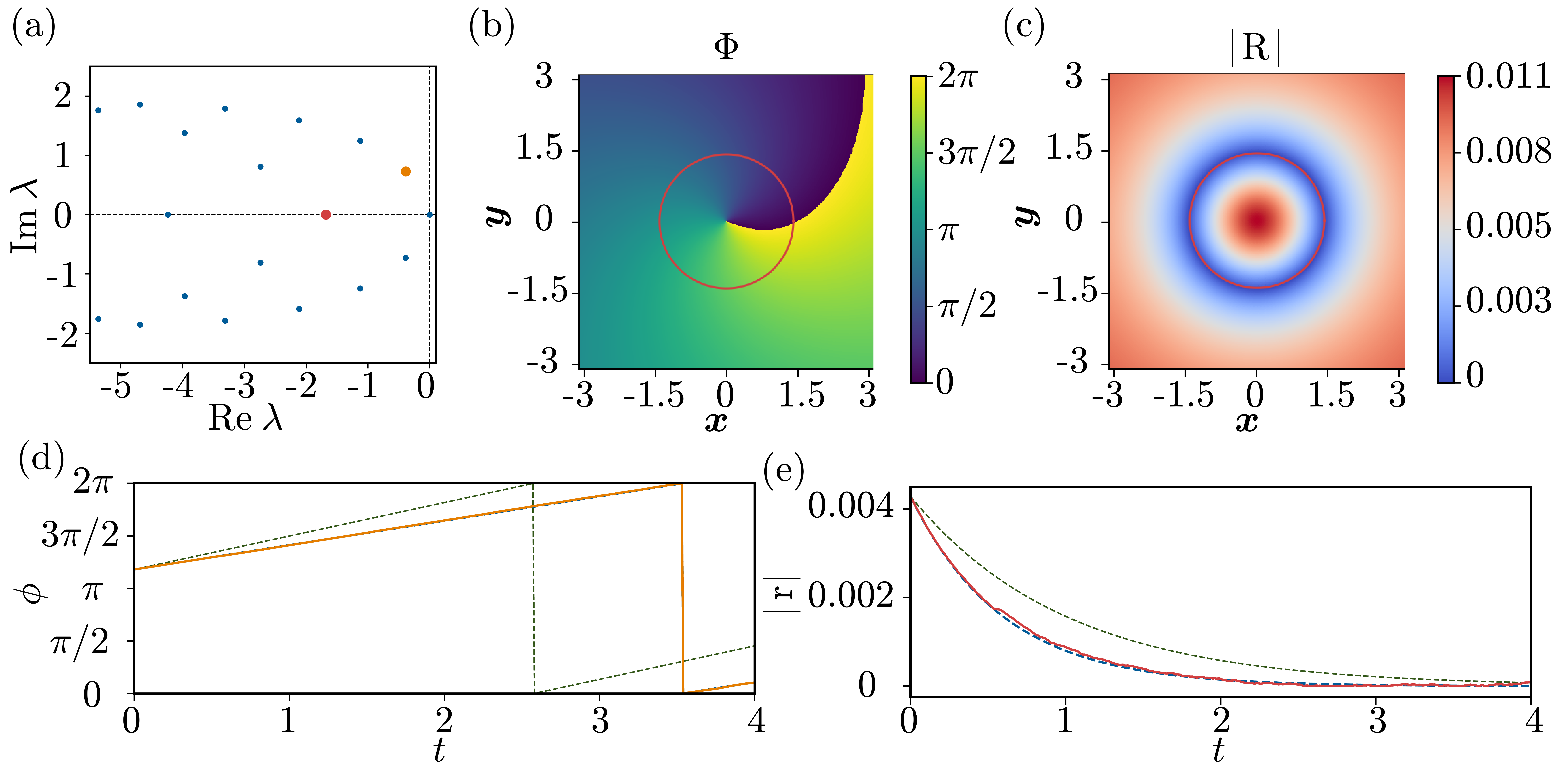}
	%\end{center}
	\caption{
		Phase and amplitude functions of a noisy Stuart--Landau model.
		(\textbf{a}) Eigenvalues of $L_{\bm X}^+$ near the imaginary axis. Orange and red dots represent $\Lambda_1$ and $\Lambda_2$, respectively.
		(\textbf{b}) Phase function $\Phi(x,y)$. The~phase origin is chosen as $(x,p)=(1.5, 0)$.
		(\textbf{c}) Amplitude function $\abs{R(x,y)}$.
		(\textbf{d}) Evolution of averaged phase $\phi$. 
		(\textbf{e}) Evolution of averaged amplitude $\abs{r}$.
		%%%
		In (\textbf{b},\textbf{c}), red-thin lines represent the deterministic limit-cycle solution.
		In (\textbf{d},\textbf{e}), averaged results over $10000$ trajectories (orange and red thin lines) and analytical solutions (blue-dotted lines) for the stochastic case and results for the deterministic case (green-dotted lines) are shown.
	}
	\label{fig_1}%MDPI: please change hythen into minus sign. and Figures should be close to where it is first mentioned in the text, we moved, please confirm.
	
\end{figure}
Figure~\ref{fig_1}a shows the eigenvalues of the Koopman operator $L_{\bm X}^+$ near the imaginary axis obtained numerically, where the eigenvalues $\Lambda_1 = \mu_1 + i \omega_1$ and $\Lambda_2$ are shown by orange and red dots, respectively.
The rightmost branch of the eigenvalues is approximately given by a parabola $\hat{\lambda}_n = i \omega_1 n - \mu_1 n^2~ (n=0, \pm 1, \pm 2, \ldots)$ passing through $\Lambda_1$~\cite{thomas2014asymptotic}.

Figure~\ref{fig_1}b,c show the phase function $\Phi(x,y)$ and amplitude function  $\abs{R(x,y)}$ 
associated with $\Lambda_1 = \mu_1 + i \omega_1$ and $\Lambda_2$, respectively.
We can observe that a circular region representing the local minima of the amplitude exists along the limit-cycle solution in the deterministic case and the phase increases from $0$ to $2\pi$ along this circle. 
In contrast to the deterministic case, Equation~(\ref{deterministicSL}), the~amplitude does not diverge at the unstable fixed point at the origin $(x, y) = (0, 0)$, because~the system state can escape from this point in a finite time due to the effect of~noise.

To confirm that these functions yield appropriate values of the phase and amplitude on average, we obtained $10000$ trajectories by direct numerical simulations of the\linebreak Equation~(\ref{eq:sl}) from the initial point $(x_0, y_0) = (-1.5, -1.5)$ and calculated the averaged phase $\phi = \mbox{\rm Arg}\  \big[ \overline{ Q_1 }(x,y) \big]$ and  amplitude 
$ \abs{r} =   \big[ \abs{\overline{ Q_2 }(x,y)} \big]$, where $\big[ \cdot \big]$ represents a sample average over all obtained trajectories.
Figure~\ref{fig_1}d,e show that these values are in good agreement with the analytical solutions $\phi = \omega_1 t + \phi_0$ and $\abs{r} = \abs{r_0} \exp(\Lambda_2 t)$, where the fundamental frequency $\omega_1 = 0.728$ and the decay rate $\Lambda_2 = -1.680$ are numerically evaluated from the eigenvalues plotted in Figure~\ref{fig_1}a.
For comparison, we also show the analytical solutions for the deterministic case without noise ($D_x = D_y = 0$), namely, 
$\phi = \omega t + \phi_0 = t + \phi_0$ and $\abs{r} = \abs{r_0} \exp( \lambda_1 t) = \abs{r_0} \exp(- t)$.
The averaged phase increases more slowly
and the averaged amplitude decays more quickly than those in the deterministic case due to the effect of~noise.

\subsection{Example 2: Noisy FitzHugh--Nagumo~Model}

Next, we consider the FitzHugh--Nagumo model~\cite{fitzhugh1961impulses, nagumo1962active}
subjected to noise, described~by
\begin{align}
	\label{eq:fhz}
	dx &= ( x - a_1x^3 - y ) dt + \sqrt{D_x} dW_x,
	\cr
	dy &= \eta_1 (x + b_1) dt + \sqrt{D_y} dW_y,
\end{align}
where $x$ and $y$ are real variables,
$a_1$, $b_1$, and~$\eta_1$ are real parameters, 
$W_x$ and $W_y$ are independent Wiener processes, and~$D_x$ and $D_y$ represent the intensities of the noise, respectively.

First, we consider parameter set (A): $(a_1, b_1,$ $\eta_1, D_x, D_y)$ $= (1/3, 0.5, 0.5, 0.2, 0.2)$,
at which the deterministic vector field possesses a stable limit-cycle solution with
the natural frequency $ \omega = 0.588$ and the largest non-zero Floquet exponent $\lambda_1 = -1.11$.
We used $D = 4.2$ and $N = 151$ for the numerical~analysis.

%\begin{figure}[H]
\begin{figure}[t]
	%\begin{center}
	\includegraphics[width=\hsize,keepaspectratio]{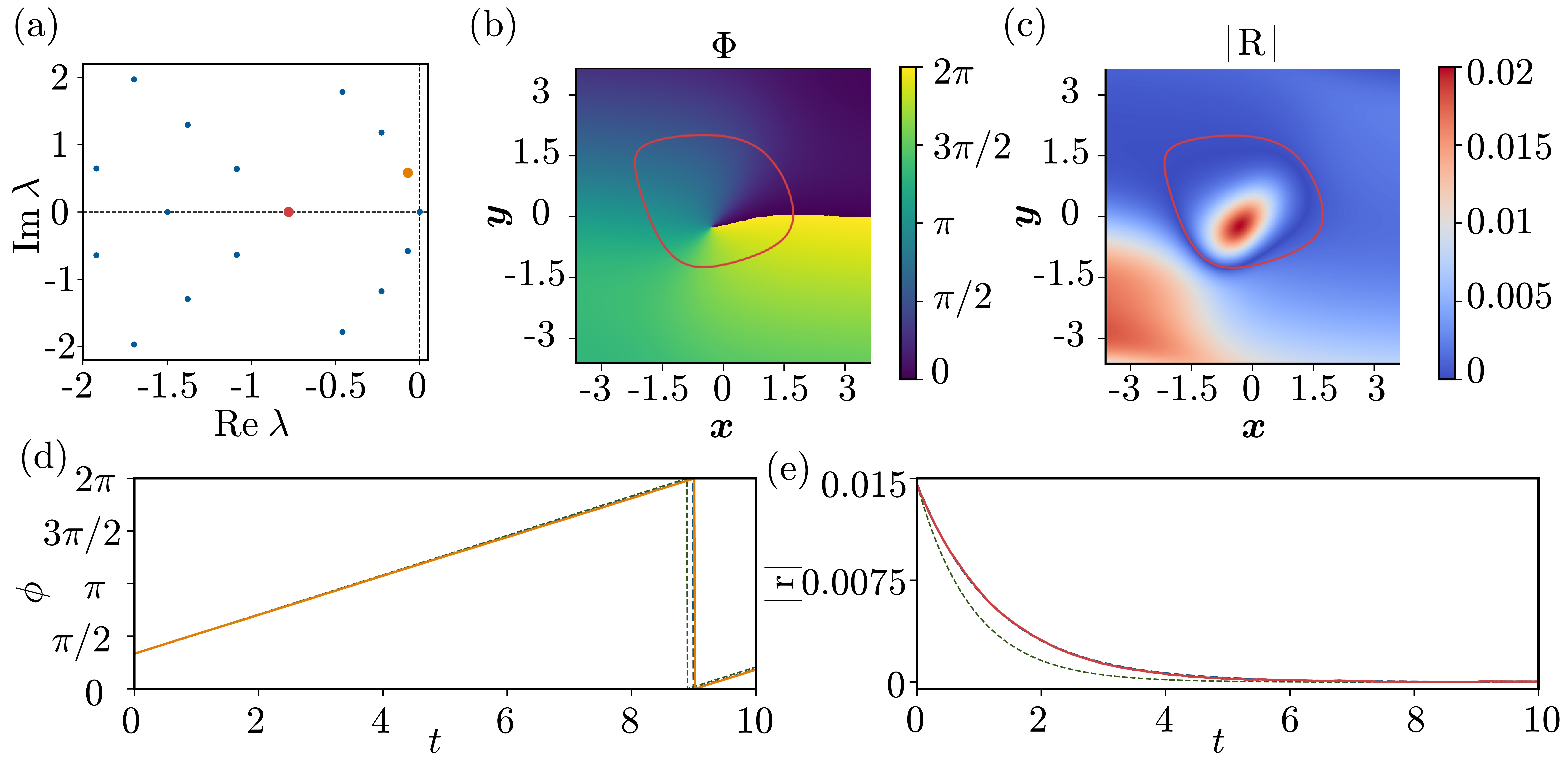}
	%\end{center}
	\caption{
		Phase and amplitude functions of a noisy FitzHugh--Nagumo model with parameter set (A). The~deterministic vector field possesses a limit-cycle solution.
		(\textbf{a}) Eigenvalues of $L_{\bm X}^+$ near the imaginary axis. Orange and red dots represent $\Lambda_1$ and $\Lambda_2$, respectively.
		(\textbf{b}) Phase function $\Phi$.
		(\textbf{c}) Amplitude function $\abs{R}$.
		%%%
		(\textbf{d}) Evolution of averaged phase $\phi$. (\textbf{d}) Evolution of averaged amplitude $\abs{r}$.
		In (\textbf{b}),  $(x,p)=(1.5, 0)$ is chosen as the phase origin. In~(\textbf{b},\textbf{c}), red-thin lines represent the deterministic limit-cycle solution.
		In (\textbf{d},\textbf{e}), averaged results over 10,000 trajectories (orange and red thin lines) and analytical solutions (blue-dotted lines) for the stochastic case and results for the deterministic case (green-dotted lines) are shown.
	}
	\label{fig_2}
\end{figure}

Figure~\ref{fig_2}a shows the eigenvalues of the Koopman operator ${L}_{\bm X}^+$ near the imaginary axis obtained numerically, where  $\Lambda_1 = \mu_1 + i \omega_1$ and $\Lambda_2$ are shown by orange and red dots, respectively.
The rightmost branch of the eigenvalues is approximately a parabola $\hat{\lambda}_n = i \omega_1 n - \mu_1 n^2~ (n=0, \pm 1, \pm 2, \ldots)$ passing through $\Lambda_1$, which is qualitatively similar to the one for the noisy Stuart--Landau model in Figure~\ref{fig_1}a.
Figure~\ref{fig_2}b,c show the phase $\Phi(x,y)$ and amplitude  $\abs{R(x,y)}$ associated with $\Lambda_1 = \mu_1 + i \omega_1$ and $\Lambda_2$.
As in the case of the noisy Stuart--Landau model, a~circular region corresponding to the local minima of the amplitude function exists around the deterministic limit-cycle solution and the phase increases along this region. The~amplitude does not diverge at the unstable fixed point due to the effect of~noise.

We calculated the time evolution of $\Phi(x, y)$ and $\abs{R(x, y)}$ by direct numerical simulations of Equation~(\ref{eq:fhz}) %~(\ref{eq:sl}) 
from the initial point $(x_0, y_0) = (0.1, 0.1)$ and averaged the results over 10,000 trajectories.
Figure~\ref{fig_2}d,e show the averaged phase $\phi = \mbox{\rm Arg}\  \big[ \overline{ Q_1 }(x,y) \big]$ and amplitude 
$\abs{r} =   \big[ { \abs{ \overline{Q_2} (x,y)}} \big]$.
They are in good agreement with the analytical solutions $\phi = \omega_1 t + \phi_0$ and $\abs{r} = \abs{r_0} \exp(\Lambda_2 t)$, where the fundamental frequency $\omega_1 = 0.582$ and the decay rate $\Lambda_2 = -0.778$ are numerically evaluated from the eigenvalues plotted in Figure~\ref{fig_2}a.
For comparison, we also show the analytical solution for the deterministic case without noise ($D_x = D_y = 0$), namely, $\phi = \omega t + \phi_0 = 0.588t + \phi_0$ and $\abs{r} = \abs{r_0} \exp(\lambda_1 t) = \abs{r_0} \exp(-1.11  t)$.
The phase evolves more slowly
and also the amplitude decays more slowly than those in the deterministic case due to the effect of~noise.

Next, we consider parameter set (B): $(a_1, b_1,$ $\eta_1, D_x, D_y) = (1/3, 1.05, 0.25, 0.1, 0.1)$. 
In this case, the~deterministic vector field does not have a stable limit-cycle, but~the system is close to a supercritical Hopf bifurcation of a limit cycle. Thus, relatively regular noise-induced oscillations occur even though the system does not have a deterministic limit cycle, a~phenomenon known as the coherence resonance~\cite{gang1993stochastic, pikovsky1997coherence, lindner2000coherence}.
We used $D = 3.9$ and $N = 151$ for the numerical~analysis.

%\begin{figure}[H]
\begin{figure}[t]
	%\begin{center}
	\includegraphics[width=\hsize,keepaspectratio]{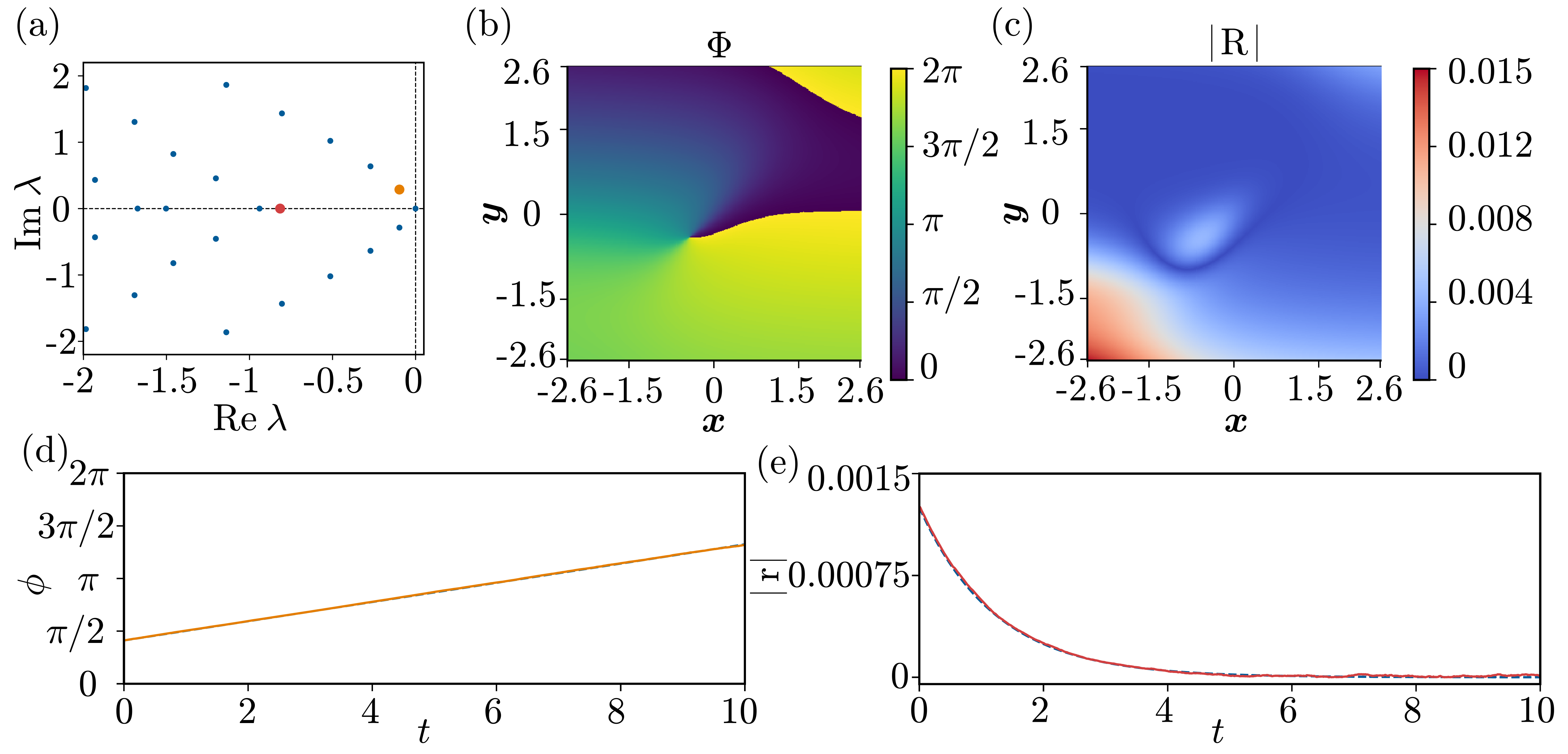}
	%\end{center}
	\caption{
		Phase and amplitude functions of a noisy FitzHugh--Nagumo model with parameter set (B). The~deterministic vector field does not possess a limit cycle, but regular noise-induced oscillations occur.
		(\textbf{a}) Eigenvalues of ${L}_{\bm X}^+$ near the imaginary axis. Orange and red dots represent $\Lambda_1$ and $\Lambda_2$, respectively.
		(\textbf{b}) Phase function $\Phi$. The~phase origin is chosen as $(x,p)=(1.5, 0)$.
		(\textbf{c}) Amplitude function $\abs{R}$.
		(\textbf{d}) Evolution of averaged phase $\phi$.
		(\textbf{e}) Evolution of averaged amplitude $\abs{r}$.
		In (\textbf{d},\textbf{e}), averaged results over 10,000 trajectories (orange-thin or red-thin lines) and analytical solutions (blue-dotted lines) are shown.
	}
	\label{fig_3}
\end{figure}

Figure~\ref{fig_3}a shows the eigenvalues of the Koopman operator ${L}_{\bm X}^+$ near the imaginary axis obtained numerically, where $\Lambda_1 = \mu_1 + i \omega_1$ and $\Lambda_2$ are shown by orange and red dots, respectively.
Figure~\ref{fig_3}b,c show the phase $\Phi(x,y)$ and amplitude  $\abs{R(x,y)}$ associated with $\Lambda_1 = \mu_1 + i \omega_1$ and $\Lambda_2$, respectively.
Interestingly, although~the deterministic system does not have a limit-cycle solution, we can still observe in Figure~\ref{fig_3}b,c a circular region representing the local minima of the amplitude function.
This region corresponds to the noise-induced oscillations and the phase increases along this circular~region.

Figure~\ref{fig_3}d,e show the time evolution of the average values of the phase and amplitude,  which are averaged over 10,000 trajectories by direct numerical simulations of the %\mbox{Equation~(\ref{eq:sl})} 
\mbox{Equation~(\ref{eq:fhz})} from the initial point $(x_0, y_0) = (0.1, 0.1)$.
The averaged phase $\phi = \mbox{\rm Arg}\  \big[ \overline{ Q_1 }(x,y) \big]$ and amplitude $r =  \big[ \abs{ \overline{ Q_2 }(x,y)} \big]$ show good agreement with the analytical solutions $\phi = \omega_1 t + \phi_0$ and $\abs{r} = \abs{r_0} \exp(\Lambda_2 t)$, where the fundamental frequency $\omega_1 = 0.287$ and the decay rate $\Lambda_2 = -0.815$ are numerically evaluated from the eigenvalues in Figure~\ref{fig_3}a.

Thus, we can introduce the phase and amplitude functions also in this case without a deterministic limit cycle by using the present definition using the Koopman~eigenfunctions.

\subsection{Example 3: Semiclassical Stuart--Landau~Model} %Oscillator}   

Finally, we apply the proposed definition of the stochastic phase and amplitude functions to a quantum limit-cycle oscillator in the semiclassical regime. As~an example, 
we use the quantum Stuart--Landau model~\cite{chia2020relaxation, arosh2021quantum} (also known as the quantum van der Pol model~\cite{lee2013quantum}) with a Kerr effect~\cite{lorch2016genuine, kato2019semiclassical} in quantum~optics.

Employing the phase space approach~\cite{gardiner1991quantum, carmichael2007statistical}, the~system state can be represented by a Wigner function $W(x, y)$ (see Reference~\cite{lorch2016genuine, kato2019semiclassical} for details). 
In the semiclassical regime, the~quantum noise is sufficiently weak and $W(x, y)$ approximately obeys a quantum FPE, which has the same form as the ordinary FPE for classical systems. Thus, we can derive the corresponding Ito SDE from the quantum FPE as
\begin{align}
	\label{eq:qvdp_ldv_re}
	d
	\left( \begin{matrix}
		x  \\
		y  \\
	\end{matrix} \right)
	&=	
	\left( \begin{matrix}
		\frac{\gamma_1  + 2\gamma_2}{2} x - (\omega_0 + 2K) p - ( \gamma_2 x - 2 K p )( x^2 + p^2 )  \\
		(\omega_0 + 2K) x + \frac{\gamma_1  + 2\gamma_2}{2} p - ( 2 K x + \gamma_2 p )( x^2 + p^2 ) \\
	\end{matrix} \right) dt
	+ 
	\sqrt{  \frac{\beta(x, y)}{2}}
	\left( \begin{matrix}
		dW_x
		\\
		dW_y
		\\
	\end{matrix}
	\right),
\end{align}
where $\beta(x, y)  = \frac{\gamma_{1}}{2}+2 \gamma_{2}\left(x^2 + y^2 -\frac{1}{2}\right)$,
$\omega_{0}$ is a frequency parameter of the oscillator, $K$ represents the Kerr parameter,
and $\gamma_{1}$ and $\gamma_{2}$ are the decay rates for the negative damping and nonlinear damping, respectively, and~$W_x$ and  $W_y$ are independent Wiener processes. The~semiclassical approximation is valid when $\gamma_{2}$ and $K$ are sufficiently small~\cite{lorch2016genuine, kato2019semiclassical}.

The deterministic part of Equation~(\ref{eq:qvdp_ldv_re}) is equivalent to the Stuart--Landau model used in Example 1
and only the coefficient of the noise term differs. Therefore, the~deterministic limit cycle and the phase and amplitude functions can be obtained from the results for Equation~(\ref{deterministicSL}), where the parameters are given by $a=\frac{\gamma_1  + 2\gamma_2}{2}$, $b=\omega_0 + 2K, c=  \gamma_2, d = 2 K$.
We set the parameter values as $( \gamma_1, \gamma_{2}, \omega_0, K) = (1, 0.05, 1, 0.025)$. With~these values, %the~system is in the semiclassical regime and 
the deterministic vector field of Equation~(\ref{eq:qvdp_ldv_re}) possesses a stable limit-cycle solution with the natural frequency $ \omega = \omega_0 -  K \gamma_1 /\gamma_2 = 0.5$ and the largest non-zero Floquet exponent $\lambda_1 =  \gamma_1  + 2\gamma_2 = -1.1$. We used $D = 6.7$ and $N = 151$ for the numerical~analysis.

Figure~\ref{fig_4}a shows the eigenvalues of the Koopman operator $L_{\bm X}^+$, where $\Lambda_1 = \mu_1 + i \omega_1$ and $\Lambda_2$ are indicated by orange and red dots, respectively.
Figure~\ref{fig_4}b,c show the phase function $\Phi(x,y)$ and amplitude function $R(x,y)$ associated with $\Lambda_1 = \mu_1 + i \omega_1$ and $\Lambda_2$.
For comparison, we also show in Figure~\ref{fig_4}d the deterministic phase function $\Phi_0(x,y)$ in Equation~(\ref{deterministicSL}).
The stochastic phase function $\Phi(x,y)$ in Figure~\ref{fig_4}b is slightly different from the deterministic phase function $\Phi_0(x,y)$ in Figure~\ref{fig_4}d, in~particular near the origin, because~the stochastic phase function includes the effect of weak quantum noise.
Figure~\ref{fig_4}c shows that the amplitude function takes the minima around the deterministic limit cycle. 
We note that the amplitude function does not diverge at the origin in contrast to the deterministic case.
Here, we used a color map with the maximum value of $0.005$ to enhance the local minima of the amplitude function, and~the region near the origin where the amplitude is larger than $0.005$ is shown in the same color (the true maximum amplitude is $0.0876$ at the origin).

%\begin{figure}[H]
\begin{figure}[t]
	%\begin{center}
	\includegraphics[width=\hsize,keepaspectratio]{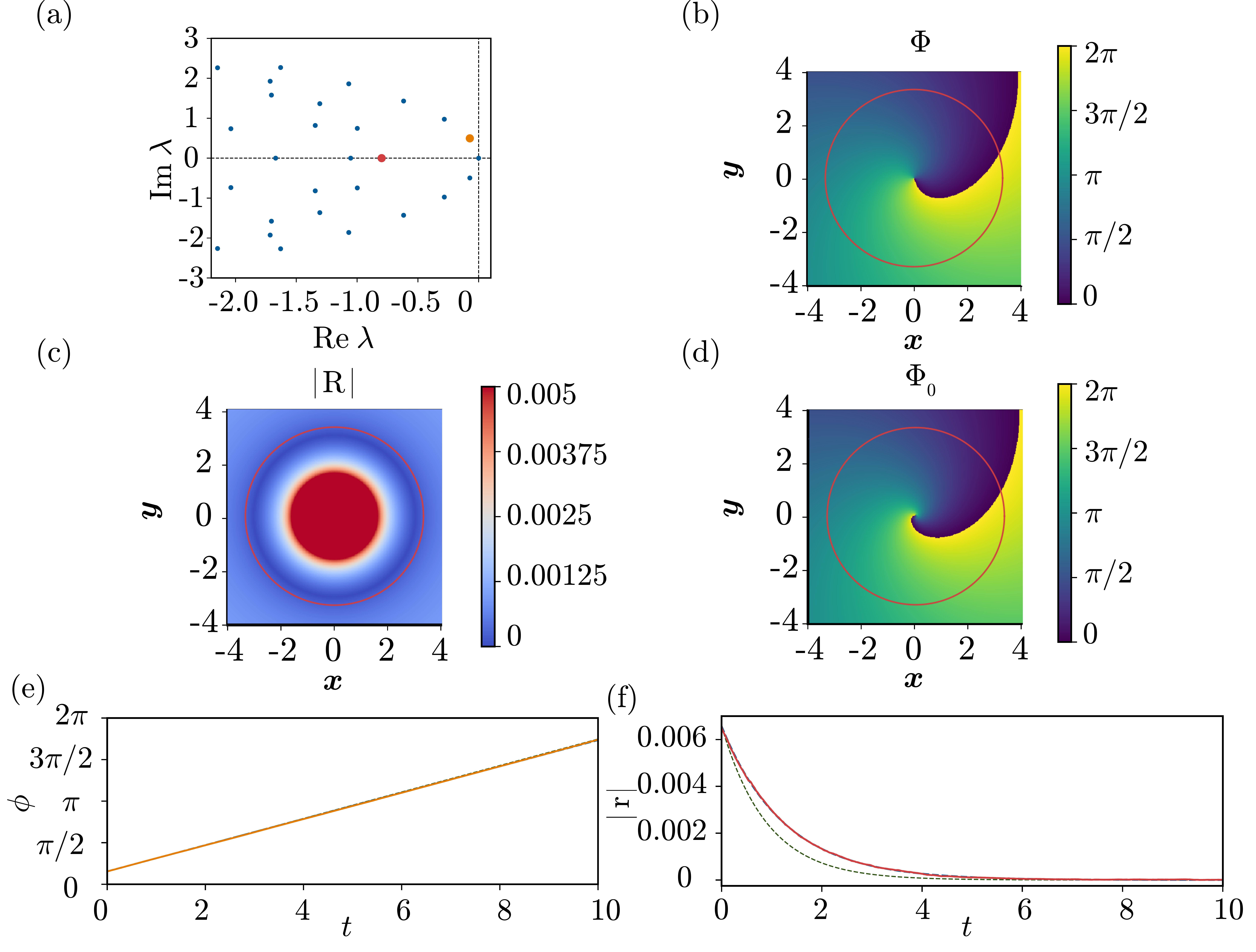}
	%\end{center}
	\caption{
		Phase and amplitude functions of a quantum Stuart--Landau model
		with a quantum Kerr effect in the semiclassical regime.
		(\textbf{a}) Eigenvalues of $L_{\bm X}^+$ near the imaginary axis. Orange and red dots represent eigenvalues $\Lambda_1$ and $\Lambda_2$, respectively.
		(\textbf{b}) Phase function $\Phi$.
		(\textbf{c}) Amplitude function $\abs{R}$.
		(\textbf{d}) Phase function $\Phi_0$ of the deterministic system.
		%%%
		(\textbf{e}) Evolution of averaged phase $\phi$. (\textbf{f}) Evolution of averaged amplitude $\abs{r}$.
		In (\textbf{b},\textbf{d}), $(x,p)=(2.5, 0)$ is chosen as the phase origin. % and the red-thin lines represent the deterministic limit-cycle solutions.
		In (\textbf{e},\textbf{f}), averaged results over 10,000 trajectories (orange and red thin lines) and analytical solutions for the semiclassical case (blue-dotted lines) and results for the deterministic case (green-dotted lines) are shown.
	}
	\label{fig_4}
\end{figure}

Figure~\ref{fig_4}e,f show the time evolution of the values of the phase and amplitude averaged over 10,000 trajectories obtained by direct numerical simulations of the semiclassical Equation~(\ref{eq:sl}) from the initial point $(x_0, y_0) = (2.5, 0)$.
They show good agreement with the analytical solutions $\phi = \omega_1 t + \phi_0$ and $\abs{r} =\abs{ r_0} \exp(\Lambda_2 t)$, where the values of the fundamental frequency $\omega_1 = 0.496$ and the decay rate $\Lambda_2 = -0.798$ are numerically evaluated from the eigenvalues shown in Figure~\ref{fig_4}a.

Thus, the~present definition of the phase and amplitude functions is also applicable to a quantum Stuart--Landau model
in the semiclassical regime and yields reasonable~values.

%%%%%%%%%%%%%%%%%%%%%%%%%%%%%%%%%%%%%%%%%%

\section{Discussion}

We proposed a definition of the asymptotic phase and amplitude functions for stochastic oscillatory systems by generalizing the definition for deterministic limit-cycle oscillators on the basis of the Koopman operator theory, motivated by the definition of the stochastic asymptotic phase introduced by Thomas and Lindner~\cite{thomas2014asymptotic}. 

The proposed asymptotic phase and amplitude for strongly stochastic oscillatory systems may be used for systematic and quantitative analysis of synchronization phenomena in stochastic oscillators.
We may also be able to develop a phase-amplitude reduction theory for strongly stochastic oscillators by using the phase and amplitude functions, which allows us to reduce the system dynamics subjected to weak external inputs to a simple two-dimensional set of equations.
Such theories will facilitate detailed analysis, control, and~optimization of noise-induced oscillatory phenomena, including the
coherence-resonance and self-induced-stochastic-resonance oscillations 
~\cite{zhu2020phase,zhu2021stochastic}. 
%have been studied recently in References

It will also be interesting to introduce amplitude functions for strongly quantum oscillatory systems on the basis of the Koopman operator theory \cite{kato2020quantum}. %~\cite{kato2019semiclassical, kato2020quantum}. 
We have recently defined the quantum asymptotic phase of strongly quantum oscillators by using the eigenoperator of the adjoint Liouville superoperator and found that it can largely differ from the asymptotic phase in the classical limit~\cite{kato2020quantum}. 
The amplitude function, which can be defined similarly, may also be very different from the classical counterpart and characterize the quantum signatures of synchronization observed in the strong quantum~regime.

%%%%%%%%%%%%%%%%%%%%%%%%%%%%%%%%%%%%%%%%%%

\section{Conclusions}

We proposed a definition of the asymptotic phase and amplitude functions for stochastic oscillatory systems. %by generalizing the definition for deterministic limit-cycle oscillators on the basis of the Koopman operator theory. 
The proposed phase and amplitude functions are introduced in terms of the backward Fokker--Planck operator, which can be interpreted as the Koopman operator for the stochastic system.
The validity of the phase and amplitude functions was numerically demonstrated for noisy Stuart--Landau and FitzHugh--Nagumo models
and also for a quantum Stuart--Landau model
in the semiclassical~regime.\\

\textit{Note added}—
During the preparation of this article
, we noticed a new, closely related study by P\'erez-Cervera, Lindner, and~Thomas~\cite{perez2021isostables}, which introduced the isostables (level sets of the asymptotic amplitude function) for stochastic oscillators on the basis of the backward Kolmogorov equation and analyzed the examples of a spiral sink, a~noisy Stuart--Landau oscillator, and~a noisy heteroclinic oscillator. Our present results differ from Reference~\cite{perez2021isostables} in that we explicitly discussed the relationship with the Koopman operator theory and analyzed a noisy excitable system and quantum limit-cycle oscillator in the semiclassical regime, in~addition to noisy limit-cycle oscillators. We thus believe our results provide different insights %into the present problem
and are complementary to Reference~\cite{perez2021isostables}.

\vspace{6pt} 

{This research was funded by JSPS KAKENHI JP17H03279, JP18H03287, JPJSBP120202201, JP20J13778, JP20F40017, JP20K19883, JP19KT0030, JP19K03671, and JST CREST JP-MJCR1913}
\\

%\bibliographystyle{apsrev4-1}
%\bibliography{kato-nakao-sto_iso.bib}

%merlin.mbs apsrev4-1.bst 2010-07-25 4.21a (PWD, AO, DPC) hacked
%Control: key (0)
%Control: author (72) initials jnrlst
%Control: editor formatted (1) identically to author
%Control: production of article title (-1) disabled
%Control: page (0) single
%Control: year (1) truncated
%Control: production of eprint (0) enabled
%

\end{document}